\theoremstyle{definition}
\newtheorem{theorem}{Theorem}
\newtheorem{corollary}{Corollary}
\newtheorem{lemma}{Lemma}
\newtheorem{remark}{Remark}
\renewcommand{\theequation}{\thesection.\arabic{equation}}
\begin{document}

\fontsize{12}{14pt plus.8pt minus .6pt}\selectfont \vspace{0.8pc}
\centerline{\large\bf Variable screening with multiple studies}
\vspace{.4cm} \centerline{Tianzhou Ma$^1$, Zhao Ren$^2$ and George C. Tseng$^1$} \vspace{.4cm} \centerline{\it 
$^1$Department of Biostatistics, University of Pittsburgh} \par \centerline{\it $^2$Department of Statistics, University of Pittsburgh} \vspace{.55cm} \fontsize{9}{11.5pt plus.8pt minus
.6pt}\selectfont

\begin{abstract}
Advancement in technology has generated abundant high-dimensional data that allows integration of multiple relevant studies. 
Due to their huge computational advantage, variable screening methods based on marginal correlation have become promising alternatives to the popular regularization methods for variable selection. However, all these screening methods are limited to single study so far. In this paper, we consider a general framework for variable screening with multiple related studies, and further propose a novel two-step screening procedure using a self-normalized estimator for high-dimensional regression analysis in this framework. Compared to the one-step procedure and rank-based sure independence screening (SIS) procedure, our procedure greatly reduces false negative errors while keeping a low false positive rate. Theoretically, we show that our procedure possesses the sure screening property with weaker assumptions on signal strengths and allows the number of features to grow at an exponential rate of the sample size. In addition, we relax the commonly used normality assumption and allow sub-Gaussian distributions. Simulations and a real transcriptomic application illustrate the advantage of our method as compared to the rank-based SIS method. \\~\\
{\it Key words and phrases:}
Multiple studies, Partial faithfulness, Self-normalized estimator, Sure screening property, Variable selection

\end{abstract}

\fontsize{12}{14pt plus.8pt minus .6pt}\selectfont

\setcounter{equation}{0} 
\section{Introduction}
\label{sec:intro}

In many areas of scientific disciplines nowadays such as omics studies (including genomics, transcriptomics, etc.), biomedical imaging and signal processing, high dimensional data with much greater number of features than the sample size (i.e. $p>>n$) have become rule rather than exception. For example, biologists may be interested in predicting certain clinical outcome (e.g. survival) using the gene expression data where we have far more genes than the number of samples. With the advancement of technologies and affordable prices in recent biomedical research, more and more experiments have been performed on a related hypothesis or to explore the same scientific question. Since the data from one study often have small sample size with limited statistical power, effective information integration of multiple studies can improve statistical power, estimation accuracy and reproducibility. Direct merging of the data (a.k.a. ``mega-analysis") is usually less favored due to the inherent discrepancy among the studies \citep{tseng2012comprehensive}. New statistical methodologies and theories are required to solve issues in high-dimensional problem when integrating multiple related studies. 

Various regularization methods have been developed in the past two decades and frequently used for feature selection in high-dimensional regression problems. Popular methods include, but are not limited to, Lasso \citep{tibshirani1996regression}, SCAD \citep{fan2001variable}, elastic net \citep{zou2005regularization} and adaptive Lasso \citep{zou2006adaptive}. When group structure exists among the variables (for example, a set of gene features belonging to a pre-specified pathway), group version of regularization methods can be applied \citep{yuan2006model, meier2008group, nardi2008asymptotic}. One can refer to \cite{fan2010selective} and \cite{huang2012selective} for a detailed overview of variable selection and group selection in high-dimensional models. When the number of features grows significantly larger than the sample size, most regularization methods perform poorly due to the simultaneous challenges of computation expediency, statistical accuracy and algorithmic stability \citep{fan2009ultrahigh}. Variable screening methods become a natural way to consider by first reducing to a lower or moderate dimensional problem and then performing variable regularization.  \cite{fan2008sure} first proposed a sure independent screening (SIS) method to select features based on their marginal correlations with the response in the context of linear regression models and showed such fast selection procedure enjoyed a ``sure screening property". Since the development of SIS, many screening methods have been proposed for generalized linear models \citep{fan2009ultrahigh,fan2010sure,chang2013marginal}, nonparametric additive models or semiparametric models \citep{fan2011nonparametric,chang2016local}, quantile linear regression \citep{ma2017variable}, Gaussian graphical models \citep{luo2014sure,liang2015equivalent} or exploit more robust measures for sure screening \citep{zhu2011model,li2012feature,li2017variable}. However, all these screening methods are limited to single study so far. 

In this paper, we first propose a general framework for simultaneous variable screening with multiple related studies. Compared to single study scenario, inclusion of multiple studies gives us more evidence to reduce dimension and thus increases the accuracy and efficiency of removing unimportant features during screening. To our knowledge, our paper is the first to utilize multiple studies to help variable screening in high-dimensional linear regression model. Such a framework provides a novel perspective to the screening problem and opens a door to the development of methods using multiple studies to perform screening under different types of models or with different marginal utilities. In this framework, it is natural to apply a selected screening procedure to each individual study, respectively. However, important features with weak signals in some studies may be incorrectly screened out if only such a one-step screening is performed. To avoid such false negative errors and fully take advantage of multiple studies, we further propose a two-step screening procedure, where one additional step of combining studies with potential zero correlation is added to the one-step procedure for a second check. This procedure has the potential to save those important features with weak signals in individual studies but strong aggregate effect across studies during the screening stage. Compared to the naive multiple study extension of SIS method, our procedure greatly reduces the false negative errors while keeping a low false positive rate. These merits are confirmed by our theoretical analysis. Specifically, we show that our procedure possesses the sure screening property with weaker assumptions on signals and allows the number of features to grow at an exponential rate of the sample size. Furthermore, we only require the data to have sub-Gaussian distribution via using novel self-normalized statistics. Thus our procedure can be applied to more general distribution family other than Gaussian distribution, which is considered in \cite{fan2008sure} and \cite{buhlmann2010variable} for a related screening procedure under single study scenarios. After screening, we further apply two general and applicable variable selection algorithms: the multiple study extension of PC-simple algorithm proposed by \cite{buhlmann2010variable} as well as a two-stage feature selection method to choose the final model in a lower dimension. 

The rest of the paper is organized as follows. In Section 2, we present a framework for variable screening with multiple related studies as well as notations. Then we propose our two-step screening procedure in Section 3. Section 4 provides the theoretical properties of our procedure, and demonstrates the benefits of multiple related studies as well as the advantages of our procedure. General algorithms for variable selection that can follow from our screening procedure are discussed in Section 5. Section 6 and 7 include the simulation studies and a real data application on three breast cancer transcriptomic studies, which illustrate the advantage of our method in reducing false negative errors and retaining important features as compared to the rank-based SIS method. We conclude and discuss possible extensions of our procedure in Section 8. Section 9 provides technical proofs to the major theorems.  

\setcounter{equation}{0} 
\section{Model and Notation}
\label{sec:model}

Suppose we have data from $K$ related studies, each has $n$ observations. Consider a random design linear model in each study $k \in [K]$ ($[K]=1,\ldots, K$): 
 \begin{equation}
   \label{eq1}
  Y^{(k)}  =  \sum_{j=1}^p  \beta^{(k)}_j  X^{(k)}_j  + \epsilon^{(k)},   
\end{equation} 
where each $Y^{(k)}\in \mathbb{R}$, each $X^{(k)} = (X^{(k)}_1,\ldots, X^{(k)}_p)^T \in \mathbb{R}^{p}$ with $E(X^{(k)})=\mu_X^{(k)}$ and cov$(X^{(k)})=\Sigma_X^{(k)}$, each $\epsilon^{(k)} \in \mathbb{R}$ with $E(\epsilon^{(k)})=0$ and var$(\epsilon^{(k)})=\sigma^2$ such that $\epsilon^{(k)}$ is uncorrelated with $X^{(k)}_1,\ldots, X^{(k)}_p$, and $\beta^{(k)} = (\beta^{(k)}_1, \ldots, \beta^{(k)}_p)^T \in \mathbb{R}^{p}$. We assume implicitly with $E(Y^{(k)2})<\infty$ and $E\{(X^{(k)}_j)^2\}<\infty$ for $j\in [p]$ ($[p]=1,\ldots, p$). 

When $p$ is very large, we usually assume that only a small set of covariates are true predictors that contribute to the response. In other words, we assume most of $\beta_j=(\beta^{(1)}_j, \ldots, \beta^{(K)}_j)^T$, where $j\in[p]$, are equal to a zero vector. In addition, in this paper, we assume $\beta_j^{(k)}$'s are either zero or non-zero in all $K$ studies. This framework is partially motivated by a high-dimensional linear random effect model considered in literature (e.g.,\cite{jiang2016high}). More specifically, we can have $\beta=(\beta_{(1)}^T, 0^T)^T$, where $\beta_{(1)}$ is the vector of the first $s_0$ non-zero components of $\beta$ ($1\le s_0\le p$). Consider a random effect model where only the true predictors of each study are treated as the random effect, that is, $\beta^{(k)}=
(\beta_{(1)}^{(k)},0)^T$ and $\beta_{(1)}^{(k)}$ is distributed as $N(\beta_{(1)},\tau^2I_{s_0})$, where $\tau^2$ is independent of $\epsilon$ and $X$. Consequently, $\beta_j^{(k)}$'s are either zero or non-zero in all $K$ studies with probability one. Such assumption fits the reality well, for example, in a typical GWAS study, a very small pool of SNPs are reported to be associated with a complex trait or disease among millions \citep{jiang2016high}. 

With $n$ i.i.d. observations from model (\ref{eq1}), our purpose is to identify the non-zero $\beta_{(1)}$, thus we define the following index sets for active and inactive predictors:
\begin{equation}
\label{eq2}
\begin{split}
\mathcal{A} & = \{j\in [p]; \beta_j  \neq 0\} =\{j\in [p];  \beta_j^{(k)} \neq 0 \text{ for all } k \}; \\  
\mathcal{A}^C & = \{j\in [p]; \beta_j = 0\} =   \{j\in [p];  \beta_j^{(k)} = 0 \text{ for all }  k \}, 
\end{split}
\end{equation}
where $\mathcal{A}$ is our target. Clearly, under our setting, $\mathcal{A}$ and $\mathcal{A}^C$ are complementary to each other so that the identification of $\mathcal{A}^C$ is equivalent to the identification of $\mathcal{A}$. Let $|\mathcal{A}|=s_0$, where $|\cdot|$ denotes the cardinality.

\setcounter{equation}{0} 
\section{Screening procedure with multiple studies} 
\label{sec:procedure}

\subsection{Sure independence screening}
\label{subsec:sis}

For a single study ($K=1$), \cite{fan2008sure} first proposed the variable screening method called sure independence screening (SIS) which ranked the importance of variables according to their marginal correlation with the response and showed its great power in preliminary screening and dimension reduction for high-dimensional regression problems. \cite{buhlmann2010variable} later introduced the partial faithfulness condition that a zero partial correlation for some separating set $S$ implied a zero regression coefficient and showed that it held almost surely for joint normal distribution. In the extreme case when $S=\emptyset$, it is equivalent to the SIS method. 

The purpose of sure screening is to identify a set of moderate size $d$ (with $d<<p$) that will still contain the true set $\mathcal{A}$. Equivalently, we can try to identify $\mathcal{A}^C$ or subsets of $\mathcal{A}^C$ which contain unimportant features that need to be screened out. There are two potential errors that may occur in any sure screening methods \citep{fan2010selective}: 
\begin{enumerate}
\item \textbf{False Negative (FN):} Important predictors that are marginally uncorrelated but jointly correlated with the response fail to be selected.
\item \textbf{False Positive (FP):} Unimportant predictors that are highly correlated with the important predictors can have higher priority to be selected than other relatively weaker important predictors.
\end{enumerate}

The current framework for variable screening with multiple studies is able to relieve us from the FP errors significantly. Indeed, we have multiple studies in our model setting thus we have more evidence to exclude noises and reduce FP errors than single study. In addition, sure screening is used to reduce dimension at a first stage, so we can always include a second stage variable selection methods such as Lasso or Dantzig selection to further refine the set and reduce FP errors. 

The FN errors occur when signals are falsely excluded after screening. Suppose $\rho_j$ is the marginal correlation of the $j$th feature with the response, with which we try to find the set $\{j: \rho_j=0 \}$ to screen out. Under the assumption of partial faithfulness (for explicit definition, see Section \ref{subsec:sure}), these variables have zero coefficients for sure so the FN errors are guaranteed to be excluded. However, this might not be true for the empirical version of marginal correlation. For a single study ($K=1$), to rule out the FN errors in empirical case, it is well-known that the signal-to-noise ratio has to be large (at least of an order of $(\log p /n)^{1/2}$ after Bonferroni adjustment). In the current setting with multiple studies, the requirement on strong signals remains the same if we naively perform one-step screening in each individual study. As we will see next, we propose a novel two-step screening procedure which allows weak signals in individual studies as long as the aggregate effect is strong enough. Therefore our procedure is able to reduces FN errors in the framework with multiple studies.  

Before closing this section, it is worthwhile to mention that, to perform a screening test, one usually applies Fisher's z-transformation on the sample correlation \citep{buhlmann2010variable}. However, this will require the bivariate normality assumption. Alternatively, in this paper, we propose to use the self-normalized estimator of correlation that works generally well even for non-Gaussian data \citep{shao1999cramer}. Similar ideas have been applied in the estimation of large covariance matrix \citep{cai2016large}. 

\subsection{Two-step screening procedure with multiple studies} 
\label{subsec:twostep}

In the presence of multiple studies, we have more evidence to reduce dimension and $\rho_j^{(k)}=0$ for any $k$ will imply a zero coefficient for that feature. On one hand, it is possible for features with zero $\beta_j$ to have multiple non-zero $\rho_j^{(k)}$'s. On the other hand, a non-zero $\beta_j$ will have non-zero $\rho_j^{(k)}$'s in all studies. Thus, we aim to identify the following two complementary sets while performing screening with multiple studies:

\begin{equation}
    \label{eq3}
   \begin{split}
\mathcal{A}^{[0]}  &=  \{j\in [p]; \quad \min_k |\rho_j^{(k)}| = 0 \}, \\ 
\mathcal{A}^{[1]}  &=  \{j\in [p]; \quad \min_k |\rho_j^{(k)}| \neq 0 \}.
  \end{split}
\end{equation}

We know for sure that $ \mathcal{A}^{[0]}  \subseteq \mathcal{A}^C$ and $ \mathcal{A}  \subseteq \mathcal{A}^{[1]}$ with the partial faithfulness assumption. For $j \in \mathcal{A}^{[0]}$, the chance of detecting a zero marginal correlation in at least one study has been greatly increased with increasing $K$, thus unimportant features will more likely be screened out as compared to single study scenario.   

One way to estimate $\mathcal{A}^{[1]}$ is to test $H_0: \rho_j^{(k)}=0$ of each $k$ for each feature $j$. When any of the $K$ tests is not rejected for a feature, we will exclude this feature from $\mathcal{\hat{A}}^{[1]}$ (we call it the ``One-Step Sure Independence Screening" procedure, or ``OneStep-SIS" for short). This can be viewed as an extension of the screening test to multiple study scenario. However, in reality, it is possible for important features to have weak signals thus small $|\rho_j^{(k)}|$'s in at least one study. These features might be incorrectly classified into $\mathcal{\hat{A}}^{[0]}$ since weak signals can be indistinguishable from null signals in individual testing. It will lead to the serious problem of false exclusion of important features (FN) from the final set during screening. 

This can be significantly improved by adding a second step to combine those studies with potential zero correlation (i.e., fail to reject the null $H_0: \rho_j^{(k)}=0$) identified in the first step and perform another aggregate test. For the features with weak signals in multiple studies, as long as their aggregate test statistics is large enough, they will be retained. Such procedure will be more conservative in screening features as to the first step alone, but will guarantee to reduce false negative errors.  

For simplicity, we assume $n$ i.i.d. observations $(X^{(k)}_i,Y_i^{(k)})$, $i\in[n]$, are obtained from all $K$ studies. It is straightforward to extend the current procedure and analysis to the scenarios with different sample sizes across multiple studies, and thus omitted.  Our proposed ``Two-Step Aggregation Sure Independence Screening" procedure (``TSA-SIS" for short) is formally described below: \\

\noindent{\bf Step 1. Screening in each study} 

In the first step, we perform screening test in each study $k\in [K]$ and obtain the estimate of study set with potential zero correlations $\hat{l}_{j}$  for each $j\in [p]$ as:
\begin{equation}
  \label{eq4}
   \hat{l}_{j}  = \{k;  |\hat{T}^{(k)}_{j}| \le \Phi^{-1} (1- \alpha_{1}/2) \} \quad \text{ and } \quad  \hat{T}^{(k)}_{j} = \frac{\sqrt{n} \hat{\sigma}_j^{(k)} }{\sqrt{ \hat{\theta}_j^{(k)}  } }, 
\end{equation}
where $\hat{\sigma}_j^{(k)} = \frac{1}{n} \sum_{i=1}^n (X_{ij}^{(k)} - \bar{X}_j^{(k)} ) (Y_i^{(k)} - \bar{Y}^{(k)}) $ is the sample covariance and $\hat{\theta}_j^{(k)} =\frac{1}{n} \sum_{i=1}^n [ (X_{ij}^{(k)} - \bar{X}_j^{(k)} ) (Y_i^{(k)} - \bar{Y}^{(k)}) - \hat{\sigma}_j^{(k)} ]^2 $. $\hat{T}^{(k)}_{j}$ is the self-normalized estimator of covariance between $X^{(k)}_j$ and $Y^{(k)}$. $\Phi$ is the CDF of standard normal distribution and $\alpha_{1}$ the pre-specified significance level. 

In each study, we test if $|\hat{T}^{(k)}_{j}| > \Phi^{-1} (1- \alpha_{1}/2)$, if not, we will include study $k$ into $\hat{l}_{j}$. This step does not screen out any variables, but instead separates potential zero and non-zero study-specific correlations for preparation of the next step. Define the cardinality of $\hat{l}_{j}$ as $\hat{\kappa}_{j} = | \hat{l}_{j}|$. If $\hat{\kappa}_{j}=0$ (i.e., no potential zero correlation), we will for sure retain feature $j$ and not consider it in step 2; Otherwise, we move on to step 2. 

\begin{remark} \label{rem:scaling}
By the scaling property of $\hat{T}^{(k)}_{j}$, it is sufficient to impose assumptions on the standardized variables: $W^{(k)} = \frac{Y^{(k)} - E(Y^{(k)} ) }{\sqrt{\text{var}(Y^{(k)} )}},  Z^{(k)}_j = \frac{X^{(k)}_j - E(X^{(k)}_j) }{\sqrt{\text{var}(X^{(k)}_j )}}$. Thus $\hat{T}^{(k)}_{j}$ can also be treated as the self-normalized estimator of correlation. We thus can define $\theta_j^{(k)} =\text{var}(Z^{(k)}_j W^{(k)})$ and $\sigma_j^{(k)}= \text{cov}(Z^{(k)}_j , W^{(k)}) = \rho_j^{(k)}$. 
\end{remark}

\begin{remark}
In our analysis, the index set in (\ref{eq4}) is shown to coincide with $l_j(j\in \mathcal{A}^{[0]} )$ and $l_j(j\in \mathcal{A}^{[1]} )$ which will be introduced in more details in Section \ref{sec:theory}. \\
\end{remark}

\noindent{\bf Step 2. Aggregate screening} 

In the second step, we wish to test whether the aggregate effect of potential zero correlations in $\hat{l}_{j}$ identified in step 1 is strong enough to be retained. Define the statistics $\hat{L}_j = \sum\limits_{k \in  \hat{l}_{j}}  (\hat{T}^{(k)}_j)^2$ and this statistics will approximately follow a $\chi^2_{\hat{\kappa}_j}$ distribution with degree of freedom $\hat{\kappa}_j$ under null. Thus we can estimate $\mathcal{\hat{A}}^{[0]}$ by:
\begin{equation}
  \label{eq5}
\mathcal{\hat{A}}^{[0]} = \{j\in [p]; \hat{L}_j \le \varphi^{-1}_{\hat{\kappa}_j}( 1- \alpha_2) \text{ and } \hat{\kappa}_j \neq 0 \}, 
\end{equation}
or equivalently estimate $\mathcal{\hat{A}}^{[1]}$ by:
\begin{equation}
  \label{eq6}
\mathcal{\hat{A}}^{[1]} = \{j\in [p]; \hat{L}_j > \varphi^{-1}_{\hat{\kappa}_j}( 1- \alpha_2) \text{ or } \hat{\kappa}_j = 0 \} , 
\end{equation}
where $\varphi_{\hat{\kappa}_j}$ is the CDF of chi-square distribution with degree of freedom equal to $\hat{\kappa}_j$ and $\alpha_{2}$ the pre-specified significance level.

The second step takes the sum of squares of $\hat{T}^{(k)}_j$ from studies with potential zero correlation as the test statistics. For each feature $j$, we test if $\sum\limits_{k \in  \hat{l}_{j}}  (\hat{T}^{(k)}_j)^2  > \varphi^{-1}_{\hat{\kappa}_j}( 1- \alpha_2)$. If rejected, we conclude that the aggregate effect is strong and the feature needs to be retained, otherwise, we will screen it out. This step performs a second check in addition to the individual testing in step 1 and potentially saves those important features with weak signals in individual studies but strong aggregate effect. 

\begin{table}[t!] 
\caption{Toy example to demonstrate the strength of two-step screening procedure.}
\label{tab:1}\par
\vskip .2cm
\begin{center}
\begin{tabular}{ |c|c|c|c|c| }  
\hline
&  & S1 (signal) & S2 (signal)  & N1 (noise) \\
 \hline 
&  k=1 & $|\hat{T}^{(1)}_1| = 3.71$  & $|\hat{T}^{(1)}_2| = 3.70$  & $|\hat{T}^{(1)}_3| = 0.42$   \\ 
 \hline 
&  k=2 & $|\hat{T}^{(2)}_1| = 3.16$  & $|\hat{T}^{(2)}_2| = 2.71$ &  $|\hat{T}^{(2)}_3| = 0.54$   \\ 
  \hline 
& k=3 & $|\hat{T}^{(3)}_1| = 3.46$  & $|\hat{T}^{(3)}_2| = 2.65$ &  $|\hat{T}^{(3)}_3| = 0.56$   \\ 
 \hline 
& k=4 & $|\hat{T}^{(4)}_1| = 3.63$  & $|\hat{T}^{(4)}_2| = 2.68$ &  $|\hat{T}^{(4)}_3| = 0.12$   \\ 
 \hline 
& k=5 & $|\hat{T}^{(5)}_1| = 3.24$  & $|\hat{T}^{(5)}_2| = 1.94$ &  $|\hat{T}^{(5)}_3| = 0.69$   \\ 
   \hline 
\multirow{5}{*}{TSA-SIS} & $\hat{l}_j$ & $\emptyset$ & $\{2,3,4,5\}$ &  $\{1,2,3,4,5\}$ \\
& $\hat{\kappa}_j$ & $0$ & $4$ & $5$ \\
& $\hat{L}_j$ & - & $25.31 > \varphi_{4}(0.95) $  &  $1.27 <  \varphi_{5}(0.95) $\\
& $\mathcal{\hat{A}}^{[0]}$ & N & N &  Y \\
 & $\mathcal{\hat{A}}^{[1]}$ & Y & Y &  N \\
 \hline
 \multirow{2}{*}{OneStep-SIS} & $\mathcal{\hat{A}}^{[0]}$ & N & Y  & Y \\
 & $\mathcal{\hat{A}}^{[1]}$ & Y & N (FN)  & N \\
 \hline
\end{tabular}
\end{center}
\end{table}

In Table \ref{tab:1}, we use a toy example to demonstrate our idea and compare the two approaches (``OneStep-SIS" vs.  ``TSA-SIS"). In this example, suppose we have five studies ($K=5$) and three features (two signals and one noise). ``S1" is a strong signal with $\beta=0.8$ in all studies, ``S2" is a weak signal with $\beta=0.4$ in all studies and ``N1" is a noise with $\beta=0$. In hypothesis testing, both small $\beta$ and zero $\beta$ can give small marginal correlation and are sometimes indistinguishable. Suppose $T=3.09$ is used as the threshold (corresponding to $\alpha_1=0.001$). For the strong signal ``S1", all studies have large marginal correlations, so both ``OneStep-SIS" and ``TSA-SIS" procedures include it correctly. For the weak signal ``S2", since in many studies it has small correlations, it is incorrectly screened out by ``OneStep-SIS" procedure (False Negative). However, the ``TSA-SIS" procedure saves it in the second step (with $\alpha_2=0.05$). For the noise ``N1", both methods tend to remove it after screening. 

\setcounter{equation}{0} 
\section{Theoretical properties}
\label{sec:theory}

\subsection{Assumptions and conditions}
\label{subsec:assumption}

We impose the following conditions to establish the model selection consistency of our procedure:
\begin{itemize}
\item[(C1)] (Sub-Gaussian Condition) There exist some constants $M_1>0$ and $\eta >0$ such that for all $|t|\le \eta$, $j\in [p]$, $k\in [K]$:  $$E\{\exp(tZ_j^{(k)2})\} \le M_1,  \quad E\{\exp(tW^{(k)2})\} \le M_1.$$ In addition, there exist some $\tau_0 >0 $ such that $\min\limits_{j,k} \theta^{(k)}_j \ge \tau_0$.
\item[(C2)] The number of studies $K= O(p^b)$ for some constant $b\ge 0$. The dimension satisfies: $\log^3(p) = o(n)$ and $\kappa_j \log^2 p = o(n)$, where $\kappa_j$ is defined next.
\item[(C3)] For $j\in \mathcal{A}^{[0]}$, $l_j(j\in \mathcal{A}^{[0]} )= \{k; \rho_j^{(k)} =0  \}$ and $ \kappa_j = |l_j|$. If $k\notin l_j$, then $|\rho_j^{(k)}| \ge C_3 \sqrt{\frac{\log p}{n}} \sqrt{1.01 \theta_j^{(k)}} $, where $C_3 = 3(L+1+b)$.  
\item[(C4)] For $j \in \mathcal{A}^{[1]}$, $l_j(j\in \mathcal{A}^{[1]} )= \{k; |\rho_j^{(k)}| < C_1\sqrt{\frac{\log p}{n}} \sqrt{0.99 \theta_j^{(k)}}  \}$ and $\kappa_j = |l_j|$, where $C_1 = L+1+b$. If $k\notin l_j$, then $|\rho_j^{(k)}| \ge C_3 \sqrt{\frac{\log p}{n}} \sqrt{1.01 \theta_j^{(k)}} $. In addition, we require $\sum\limits_{k \in l_{j}}|\rho_j^{(k)}|^2  \ge \frac{C_{2}(\log^2p + \sqrt{\kappa_j \log p}) }{n}$, where $C_{2}$ is some large positive constant. 
\end{itemize}

The first condition (C1) assumes that each standardized variable $Z_j^{(k)}$ or $W^{(k)}$, $j\in[p]$, $k\in[K]$, marginally follow a sub-Gaussian distribution in each study. This condition relaxes the normality assumption in \citep{fan2008sure,buhlmann2010variable}. The second part of (C1) assumes there always exist some positive $\tau_0$ not greater than the minimum variance of $Z^{(k)}_j W^{(k)}$. In particular, if $(X_j^{(k)} , Y^{(k)})$ jointly follows a multivariate normal distribution, then $\theta^{(k)}_j = 1 + \rho^{(k)2}_j \ge 1$, so we can always pick $\tau_0 = 1$.  

The second condition (C2) allows the dimension $p$ to grow at an exponential rate of sample size $n$, which is a fairly standard assumption in high-dimensional analysis. Many sure screening methods like ``SIS", ``DC-SIS" and ``TPC" have used this assumption \citep{fan2008sure,li2012feature,li2017variable}. Though the PC-simple algorithm \citep{buhlmann2010variable} assumes a polynomial growth of $p_n$ as a function of $n$, we notice that it can be readily relaxed to an exponential of $n$ level.  Further, we require the product $\kappa_j \log^2 p$ to be small, which is used to control the errors in the second step of our screening procedure. It is always true if $K\log^2 p = o(n)$. 

Conditions (C3) assumes a lower bound on non-zero correlation (i.e. $k\notin l_j$) for features from $\mathcal{A}^{[0]}$. In other words, if the marginal correlation $|\rho^{(k)}_{j}|$ is not zero, then it must have a large enough marginal correlation to be detected. While this has been a key assumption for a single study in many sure screening methods \citep{fan2008sure,buhlmann2010variable,li2012feature,li2017variable}, we only impose this assumption for $j\in\mathcal{A}^{[0]}$ rather than all $j\in[p]$. This condition is used to control for type II error in step 1 for features from $\mathcal{A}^{[0]}$. 

Condition (C4) gives assumptions on features from $\mathcal{A}^{[1]}$. We assume the correlations to be small for those $k\in l_j$ and large for those $k\notin l_j$ so that studies with strong or weak signals can be well separated in the first step. This helps control the type II error in step 1 for features from $\mathcal{A}^{[1]}$. For those studies in $l_j$, we further require their sum of squares of correlations to be greater than a threshold, so that type II error can be controlled in step 2. This condition is different from other methods with single study scenario, where they usually assume a lower bound on each marginal correlation for features from $\mathcal{A}^{[1]}$ just like (C3). We relax this condition and only put restriction on their $L_2$ norm. This allows features from $\mathcal{A}^{[1]}$ to have weak signals in each study but combined strong signal. To appreciate this relaxation, we compare the minimal requirements with and without step 2. For each $j\in\mathcal{A}^{[1]}$, in order to detect this feature, we need $|\rho_j^{(k)}| \ge C (\log p/n)^{1/2}$ with some large constant $C$ for all $k\in l_j$, and thus at least $\sum\limits_{k \in l_{j}}|\rho_j^{(k)}|^2  \ge C^2\kappa_j\log p/n$. In comparison, the assumption in (C4) is much weaker in reasonable settings $\kappa_j>>\log p$.

\subsection{Consistency of the two-step screening procedure}
\label{subsec:consist}

We state the first theorem involving the consistency of screening in our step 1: 

 \begin{theorem}
 \label{th1}
Consider a sequence of linear models as in (\ref{eq1}) which satisfy assumptions and conditions (C1)-(C4), define the event $A = \{ \hat{l}_j = l_j \text{ for all } j \in [p] \}$, there exists a sequence $ \alpha_1 = \alpha_1(n,p) \rightarrow 0$ as $(n,p) \rightarrow \infty$ where $\alpha_1 = 2\{1 - \Phi(\gamma \sqrt{\log p})\}$ with $\gamma= 2(L+1+b)$ such that:
\begin{equation}
\label{eq7}
P(A) = 1 - O(p^{-L}) \rightarrow 1 \text{ as } (n,p) \rightarrow \infty.
\end{equation} 
\end{theorem}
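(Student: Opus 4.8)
The plan is to prove that, with probability $1-O(p^{-L})$, the data-driven index set $\hat l_j$ in (\ref{eq4}) coincides with the population set $l_j$ for \emph{every} $j\in[p]$, by controlling, for each pair $(j,k)$, the two possible misclassifications: placing a study $k\notin l_j$ into $\hat l_j$, and dropping a study $k\in l_j$ from $\hat l_j$. The stated choice $\alpha_1 = 2\{1-\Phi(\gamma\sqrt{\log p})\}$ gives $\Phi^{-1}(1-\alpha_1/2) = \gamma\sqrt{\log p}$ with $\gamma = 2(L+1+b)$, so both misclassifications are tail events for $\hat T_j^{(k)}$ at the threshold $\gamma\sqrt{\log p}$. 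Since a union bound must run over the $p\cdot K = O(p^{1+b})$ pairs, each tail event has to have probability $O(p^{-(1+b+L)})$, and this, combined with the separation conditions in (C3)--(C4), is exactly what dictates the constants $\gamma$, $C_1 = L+1+b$ and $C_3 = 3(L+1+b)$.

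First I would pass to the standardized variables $Z_j^{(k)},W^{(k)}$ of Remark~\ref{rem:scaling} and split $\hat T_j^{(k)}$ into a deterministic ``signal'' piece and a random ``noise'' piece. Writing $\eta_i = Z_{ij}^{(k)}W_i^{(k)}$ (with mean $\rho_j^{(k)}$ and variance $\theta_j^{(k)}$) and $\xi_i = \eta_i - \rho_j^{(k)}$, one checks that, up to multiplicative and additive corrections of order $O_P(n^{-1})$ and $O_P(n^{-1/2})$ produced by centering at the sample means,
\[
\hat T_j^{(k)} \;=\; \frac{\sqrt n\,\rho_j^{(k)}}{\sqrt{\hat\theta_j^{(k)}}}\;+\;\frac{\sum_{i=1}^n\xi_i}{\sqrt{\sum_{i=1}^n\xi_i^2}} ,
\]
a deterministic shift plus the \emph{classical} self-normalized sum. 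Under (C1) the $\xi_i$ are sub-exponential with variance at least $\tau_0$, so a Bernstein-type inequality for the $\xi_i^2$ (whose tails are only stretched-exponential) shows that $P\{|\hat\theta_j^{(k)}-\theta_j^{(k)}|>\varepsilon\,\theta_j^{(k)}\}$ is smaller than any fixed power of $p$ once $\log^3 p = o(n)$ holds, which is part of (C2). On that event the signal piece lies between $\sqrt n|\rho_j^{(k)}|/\sqrt{1.01\,\theta_j^{(k)}}$ and $\sqrt n|\rho_j^{(k)}|/\sqrt{0.99\,\theta_j^{(k)}}$, and this is precisely where the $0.99$/$1.01$ factors in (C3)--(C4) are used: for $k\notin l_j$ they force the signal piece above $C_3\sqrt{\log p} = 3(L{+}1{+}b)\sqrt{\log p}$; for $k\in l_j$ with $j\in\mathcal{A}^{[1]}$ they keep it below $C_1\sqrt{\log p} = (L{+}1{+}b)\sqrt{\log p}$; and for $k\in l_j$ with $j\in\mathcal{A}^{[0]}$ it is exactly $0$.

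The crucial probabilistic input, and the main obstacle, is a \emph{uniform} Cram\'er-type moderate deviation bound for the self-normalized sum, valid over the triangular array of distributions of $(Z_j^{(k)},W^{(k)})$ permitted by (C1): one needs $P\bigl(\pm\sum_i\xi_i/\sqrt{\sum_i\xi_i^2}\ge x\bigr)\le C(1-\Phi(x))\le C'\exp(-x^2/2)$ uniformly for $0\le x\le\gamma\sqrt{\log p}$. This is obtained by adapting the self-normalized moderate deviation theory of \cite{shao1999cramer} and its later refinements; because (C1) only makes $\xi_i$ sub-exponential, the Gaussian approximation for the self-normalized sum is valid only up to order $n^{1/6}$, and the condition $\log^3 p = o(n)$ in (C2) is exactly what keeps $\gamma\sqrt{\log p}$ inside that range. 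Applying it with $x\asymp(L{+}1{+}b)\sqrt{\log p}$ gives a per-pair failure probability $O(p^{-(L+1+b)^2/2})$, which dominates the (super-polynomially small) $\hat\theta_j^{(k)}$-concentration and sample-mean-concentration contributions and sums over the $O(p^{1+b})$ pairs to $O(p^{-L})$ for the stated constants.

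Finally I would intersect all the favorable events --- the concentration events for each $\hat\theta_j^{(k)}$, the moderate deviation events for both signs of the self-normalized sum, and the sub-Gaussian concentration of the sample means $\bar Z_j^{(k)},\bar W^{(k)}$ needed to kill the discarded lower-order terms. On this intersection, for $k\notin l_j$ one gets $|\hat T_j^{(k)}|\ge 3(L{+}1{+}b)\sqrt{\log p} - (L{+}1{+}b)\sqrt{\log p} > \gamma\sqrt{\log p}$, so $k\notin\hat l_j$; and for $k\in l_j$ one gets $|\hat T_j^{(k)}|\le(L{+}1{+}b)\sqrt{\log p} + (L{+}1{+}b)\sqrt{\log p} = \gamma\sqrt{\log p}$, so $k\in\hat l_j$. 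Hence $\hat l_j = l_j$ for all $j\in[p]$ simultaneously on an event of probability $1-O(p^{-L})$, which is the assertion $P(A) = 1-O(p^{-L})$.
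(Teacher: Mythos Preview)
Your proposal is correct and takes essentially the same approach as the paper. The paper packages the self-normalized moderate deviation bound and the concentration of $\hat\theta_j^{(k)}$ into a preliminary Lemma~1 (with part (i) imported from \cite{cai2011adaptive}) and then bounds four explicit error events $E^{I/II,\mathcal{A}^{[0]/[1]}}_{j,k}$, but the underlying decomposition into signal plus self-normalized noise, the use of the $0.99/1.01$ factors from (C3)--(C4) to sandwich the signal piece, and the arithmetic $C_3-\gamma=\gamma-C_1=L+1+b$ that makes the threshold $\gamma\sqrt{\log p}$ work are exactly as you describe.
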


The proof of Theorem \ref{th1} can be found in Section 9. This theorem states that the screening in our first step correctly identifies the set $l_j$ for features in both $\mathcal{A}^{[0]}$ and $\mathcal{A}^{[1]}$ (in which strong and weak signals are well separated) and the chance of incorrect assignment is low. Given the results in Theorem \ref{th1}, we can now show the main theorem for the consistency of the two-step screening procedure:

 \begin{theorem}
 \label{th2}
Consider a sequence of linear models as in (\ref{eq1}) which satisfy assumptions and conditions (C1)-(C4), we know there exists a sequence $ \alpha_1 = \alpha_1(n,p) \rightarrow 0$ and $ \alpha_2 = \alpha_2(n,p) \rightarrow 0$ as $(n,p) \rightarrow \infty$ where $\alpha_1 = 2\{1 - \Phi(\gamma \sqrt{\log p})\}$ with $\gamma= 2(L+1+b)$ and $\alpha_2 = 1 - \varphi_{\kappa_j}(\gamma_{\kappa_j} ) $ with $\gamma_{\kappa_j} = \kappa_j + C_4 (\log^2 p + \sqrt{\kappa_j \log p})$ and some constant $C_4>0$ such that:
\begin{equation}
\label{eq8}
P\{\hat{\mathcal{A}}^{[1]}(\alpha_{1}, \alpha_{2}) =\mathcal{A}^{[1]} \} = 1 - O(p^{-L}) \rightarrow 1 \text{ as } (n,p) \rightarrow \infty. 
\end{equation} 
\end{theorem}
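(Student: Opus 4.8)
The plan is to condition on the event $A=\{\hat l_j=l_j \text{ for all } j\in[p]\}$ of Theorem~\ref{th1}, which has probability $1-O(p^{-L})$, and to show that on $A$ the Step-2 decision is correct for every $j$ up to an exceptional probability $O(p^{-(L+1+b)})$; a union bound over the $p$ features together with the bound for $A^c$ then yields $P\{\hat{\mathcal A}^{[1]}=\mathcal A^{[1]}\}=1-O(p^{-L})$. On $A$ we have $\hat\kappa_j=\kappa_j$ and $\hat L_j=\sum_{k\in l_j}(\hat T_j^{(k)})^2$ with $l_j$ deterministic, and the Step-2 threshold in (\ref{eq5})--(\ref{eq6}) equals $\varphi_{\kappa_j}^{-1}(1-\alpha_2)=\gamma_{\kappa_j}=\kappa_j+C_4(\log^2 p+\sqrt{\kappa_j\log p})$. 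I would split $[p]$ into three cases: (i) $j\in\mathcal A^{[1]}$ with $\kappa_j=0$, where $j\in\hat{\mathcal A}^{[1]}$ automatically by (\ref{eq6}); (ii) $j\in\mathcal A^{[0]}$, where it must be shown that $\hat L_j\le\gamma_{\kappa_j}$; and (iii) $j\in\mathcal A^{[1]}$ with $\kappa_j\ge 1$, where it must be shown that $\hat L_j>\gamma_{\kappa_j}$.

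The workhorse is a Bernstein-type concentration lemma for the sum of squared self-normalized statistics: for a fixed $S\subseteq[K]$ with $|S|=m\le\kappa_j$, writing $\Lambda_S:=\sum_{k\in S}n(\rho_j^{(k)})^2/\theta_j^{(k)}$ and $\hat L_{j,S}:=\sum_{k\in S}(\hat T_j^{(k)})^2$, one has
\[
\Big|\hat L_{j,S}-\sum_{k\in S}\big(1+n(\rho_j^{(k)})^2/\theta_j^{(k)}\big)\Big|\ \le\ C\big(\log^2 p+\sqrt{(m+\Lambda_S)\log p}\big)
\]
with probability $1-O(p^{-(L+1+b)})$. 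To prove this I would (a) use Remark~\ref{rem:scaling} to work with the standardized pairs $(Z_j^{(k)},W^{(k)})$, so that $\hat T_j^{(k)}=\sqrt n\,\overline{ZW}^{(k)}/\sqrt{\hat\theta_j^{(k)}}$ modulo lower-order centering corrections; (b) replace the random denominator $\hat\theta_j^{(k)}$ by $\theta_j^{(k)}$ using a sub-exponential tail bound on $\tfrac1n\sum_i(Z_{ij}^{(k)}W_i^{(k)})^2$ valid under (C1), controlling the accumulated error over the $\kappa_j$ studies with the help of $\kappa_j\log^2 p=o(n)$ from (C2); (c) observe that the variables $(\tilde T_j^{(k)})^2$ with $\tilde T_j^{(k)}:=\sqrt n\,\overline{ZW}^{(k)}/\sqrt{\theta_j^{(k)}}$ are independent across $k\in S$, with mean $1+n(\rho_j^{(k)})^2/\theta_j^{(k)}+o(1)$, variance of order $1+n(\rho_j^{(k)})^2/\theta_j^{(k)}$, and sub-exponential scale of order $1+\sqrt n|\rho_j^{(k)}|/\sqrt{\theta_j^{(k)}}$ (which is $O(\sqrt{\log p})$ on the index sets that arise, since those studies sit below the weak-signal threshold of (C4)); and (d) apply Bernstein's inequality for independent sub-exponentials, whose Gaussian regime produces the deviation $\sqrt{(m+\Lambda_S)\log p}$ for tail $p^{-(L+1+b)}$ and whose exponential regime contributes only $O(\log^{3/2}p)$, both absorbed into $\log^2 p+\sqrt{(m+\Lambda_S)\log p}$.

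With the lemma in hand the three cases are short. For $j\in\mathcal A^{[0]}$, (C3) gives $l_j=\{k:\rho_j^{(k)}=0\}$, so $\Lambda_{l_j}=0$ and the lemma yields $\hat L_j\le\kappa_j+C(\log^2 p+\sqrt{\kappa_j\log p})$; taking $C_4\ge C$ gives $\hat L_j\le\gamma_{\kappa_j}$, and since $\kappa_j\ge 1$ by the definition of $\mathcal A^{[0]}$ we conclude $j\in\hat{\mathcal A}^{[0]}$. For $j\in\mathcal A^{[1]}$ with $\kappa_j\ge1$, the bound $\theta_j^{(k)}\le c_\theta$ implied by (C1) together with the $L_2$ condition in (C4) gives $\Lambda_{l_j}\ge (C_2/c_\theta)(\log^2 p+\sqrt{\kappa_j\log p})$; the lemma then yields $\hat L_j\ge\kappa_j+\Lambda_{l_j}-C(\log^2 p+\sqrt{(\kappa_j+\Lambda_{l_j})\log p})$, and a short calculation (using $\sqrt{\Lambda_{l_j}\log p}\le\varepsilon\Lambda_{l_j}+\varepsilon^{-1}\log p$ for small $\varepsilon$, $\sqrt{\kappa_j\log p}\le m_j$ and $\log^3 p\le m_j^2$ with $m_j:=\log^2 p+\sqrt{\kappa_j\log p}$) shows that for $C_2$ large relative to $C$ and $C_4$ the right-hand side exceeds $\gamma_{\kappa_j}$, hence $j\in\hat{\mathcal A}^{[1]}$. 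Union-bounding the $O(p^{-(L+1+b)})$ exceptional events over $j\in[p]$ (and over the $K=O(p^b)$ denominator events inside the lemma) and intersecting with $A$ gives total failure probability $O(p^{-L})$; since $\alpha_1\to0$ and $\alpha_2\to0$ (the latter because $(\gamma_{\kappa_j}-\kappa_j)/\sqrt{\kappa_j}\ge C_4\sqrt{\log p}\to\infty$), the theorem follows.

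\textbf{Main obstacle.} I expect the crux to be the concentration lemma, and within it step (b): because the self-normalizing denominator $\hat\theta_j^{(k)}$ is only sub-exponentially (not sub-Gaussianly) concentrated, passing from $\hat T_j^{(k)}$ to $\tilde T_j^{(k)}$ injects a multiplicative error whose accumulation over $\kappa_j$ studies must be shown negligible relative to $\log^2 p+\sqrt{\kappa_j\log p}$ — this is precisely what forces the $\kappa_j\log^2 p=o(n)$ part of (C2), and is why the Step-2 threshold carries a $\log^2 p$ term (rather than a single $\log p$) as well as a $\sqrt{\kappa_j\log p}$ term that places the Bernstein crossover between the Gaussian and exponential regimes in the right spot.
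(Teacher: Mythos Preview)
Your overall strategy --- condition on the event $A$ of Theorem~\ref{th1}, then show that the Step-2 statistic lands on the correct side of $\gamma_{\kappa_j}$ feature by feature with error $O(p^{-(L+1)})$ --- is exactly the paper's. The replacement $\hat\theta_j^{(k)}\to\theta_j^{(k)}$ (your step (b), the paper's Lemma~1(ii)) and the sample-mean centering correction (the paper's Lemma~3) are handled in the same way, and your diagnosis that $\kappa_j\log^2 p=o(n)$ is what keeps the accumulated denominator error lower order is correct.

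The gap is in steps (c)--(d). You assert that $(\tilde T_j^{(k)})^2$ is sub-exponential with scale $O(1+\sqrt n|\rho_j^{(k)}|/\sqrt{\theta_j^{(k)}})$ and then invoke Bernstein for sub-exponentials. But $\tilde T_j^{(k)}=\check H_j^{(k)}+\sqrt n\rho_j^{(k)}/\sqrt{\theta_j^{(k)}}$ is itself only sub-exponential (a normalized sum of products of sub-Gaussians), so its \emph{square} is merely $\psi_{1/2}$, not $\psi_1$; the standard sub-exponential Bernstein does not apply, and the ``$O(\log^{3/2}p)$ from the exponential regime'' is not justified. In fact the $\log^2 p$ term in the threshold $\gamma_{\kappa_j}$ is produced by this heavier $\psi_{1/2}$ tail, not by the denominator replacement in step (b) as you suggest in your last paragraph. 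The paper handles this by \emph{expanding the square} rather than treating $(\tilde T_j^{(k)})^2$ as a single summand: writing
\[
(\check H_j^{(k)}+\sqrt n\rho_j^{(k)}/\sqrt{\theta_j^{(k)}})^2=(\check H_j^{(k)2}-1)+2\check H_j^{(k)}\,\frac{\sqrt n\rho_j^{(k)}}{\sqrt{\theta_j^{(k)}}}+\Big(1+\frac{n(\rho_j^{(k)})^2}{\theta_j^{(k)}}\Big),
\]
it controls $\sum_{k\in l_j}(\check H_j^{(k)2}-1)$ via a $\psi_{1/2}$ concentration inequality of Adamczak et al.\ (this is Lemma~2, and its $t^{1/2}$ regime is the true source of the $\log^2 p$), and controls the linear cross term $2\sum_{k\in l_j}\check H_j^{(k)}\sqrt n\rho_j^{(k)}/\sqrt{\theta_j^{(k)}}$ by the sub-exponential Bernstein you had in mind, with variance proxy $\Lambda_{l_j}$. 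Once you make this decomposition, your workhorse lemma holds with the stated deviation $C(\log^2 p+\sqrt{(m+\Lambda_S)\log p})$, and the remainder of your argument (cases (i)--(iii), the $\Lambda_{l_j}$ lower bound from (C4), and the final union bound) goes through as written.
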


The proof of Theorem \ref{th2} can be found in Section \ref{sec:proof}. The result shows that the two-step screening procedure enjoys the model selection consistency and identifies the model specified in (\ref{eq3}) with high probability. The choice of significance level that yields consistency is $\alpha_1 = 2\{1 - \Phi(\gamma \sqrt{\log p})\}$ and $\alpha_2 = 1 - \varphi_{\kappa_j}(\gamma_{\kappa_j} ) $ .

\subsection{Partial faithfulness and Sure screening property}
\label{subsec:sure}

\cite{buhlmann2010variable} first came up with the partial faithfulness assumption which theoretically justified the use of marginal correlation or partial correlation in screening as follows:  
\begin{equation}
  \label{eq9}
\rho_{j\mid S}=0 \text{ for some } S \subseteq  \{j\}^C \text{ implies } \beta_j = 0,
\end{equation}
where $S$ is the set of variables conditioned on. For independence screening, $S = \emptyset$. 

Under the two conditions: the positive definiteness of $\Sigma_X$ and non-zero regression coefficients being realization from some common absolutely continuous distribution, they showed that partial faithfulness held almost surely (Theorem 1 in \cite{buhlmann2010variable}). Since the random effect model described in Section \ref{sec:model} also satisfies the two conditions, the partial faithfulness holds almost surely in each study. 

Thus, we can readily extend their Theorem 1 to a scenario with multiple studies:
\begin{corollary}
 \label{cor1}
Consider a sequence of linear models as in (\ref{eq1}) satisfying the partial faithfulness condition in each study and true active and inactive set defined in (\ref{eq2}), then the following holds for every $j\in [p]$:
\begin{equation}
  \label{eq10}
  \rho_{j\mid S}^{(k)}=0 \text{ for some } k \text{ for some } S \subseteq  \{j\}^C \text{ implies } \beta_j = 0.
\end{equation}
\end{corollary}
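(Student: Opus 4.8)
The plan is to reduce the multi-study implication to the single-study partial faithfulness statement, which we are entitled to assume holds within each study, and then to close the argument using the structural assumption on the coefficient vectors imposed in Section \ref{sec:model}. Concretely, fix $j \in [p]$ and suppose the hypothesis of (\ref{eq10}) holds: $\rho_{j\mid S}^{(k)} = 0$ for some $k \in [K]$ and some separating set $S \subseteq \{j\}^C$. Since model (\ref{eq1}) is assumed to satisfy the partial faithfulness condition (\ref{eq9}) in study $k$ — which, as noted after (\ref{eq9}), is guaranteed by the random-effect construction of Section \ref{sec:model} because $\Sigma_X^{(k)}$ is positive definite and the nonzero entries of $\beta^{(k)}$ are realizations of an absolutely continuous law, so that Theorem 1 of \cite{buhlmann2010variable} applies verbatim inside study $k$ — we may invoke that per-study implication to conclude $\beta_j^{(k)} = 0$.

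The second and final step invokes the standing dichotomy of Section \ref{sec:model}: each $\beta_j^{(k)}$ is either zero in all $K$ studies or nonzero in all $K$ studies, equivalently $j \in \mathcal{A}^C$ iff $\beta_j^{(k)} = 0$ for every $k$ and $j \in \mathcal{A}$ iff $\beta_j^{(k)} \neq 0$ for every $k$, as recorded in (\ref{eq2}). Having shown $\beta_j^{(k)} = 0$ for at least one index $k$, the only branch of this dichotomy consistent with that fact is $\beta_j^{(k)} = 0$ for all $k$, i.e.\ $j \in \mathcal{A}^C$, which is precisely $\beta_j = 0$ as a vector in $\mathbb{R}^K$. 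This establishes (\ref{eq10}) for the fixed $j$, and since $j$ was arbitrary the corollary follows.

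I do not expect any genuine analytic obstacle here: the entire content is a bookkeeping argument that splices the single-study result of \cite{buhlmann2010variable} onto the cross-study coefficient structure of our framework. The one point deserving a sentence of care is the verification that the hypotheses of the borrowed per-study result (positive definiteness of the design covariance and absolute continuity of the distribution generating the nonzero coefficients) are met study by study; under the random-effect model of Section \ref{sec:model} this is immediate, and in general it is exactly what the phrase ``satisfying the partial faithfulness condition in each study'' in the statement of the corollary is meant to encode.
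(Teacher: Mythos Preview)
Your argument is correct and matches the paper's own reasoning essentially verbatim: apply per-study partial faithfulness to get $\beta_j^{(k)}=0$ for the witnessing $k$, then use the all-or-nothing structure of (\ref{eq2}) to upgrade this to $\beta_j=0$. The extra paragraph verifying the hypotheses of the B\"uhlmann--Kalisch result is a helpful gloss but not a departure in method.
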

The proof is straightforward and thus omitted: if $\rho_{j\mid S}^{(k)}=0$ for some study $k$, then with partial faithfulness, we will have $\beta_j^{(k)} = 0$ for that particular $k$. Since we only consider features with zero or non-zero $\beta_j^{(k)}$'s in all studies in (\ref{eq2}), we will have $\beta_j = 0$. In the case of independence screening (i.e. $S = \emptyset$), $\rho_{j}^{(k)}=0$ for some $k$ will imply a zero $\beta_j$. 

With the model selection consistency in Theorem \ref{th2} and the extended partial faithfulness condition in Corollary \ref{cor1}, the sure screening property of our two-step screening procedure immediately follows:

\begin{corollary}
 \label{cor2}
Consider a sequence of linear models as in (\ref{eq1}) which satisfy assumptions and conditions (C1)-(C4) as well as the extended partial faithfulness condition in Corollary \ref{cor1}, there exists a sequence $ \alpha_1 = \alpha_1(n,p) \rightarrow 0$ and $ \alpha_2 = \alpha_2(n,p) \rightarrow 0$ as $(n,p) \rightarrow \infty$ where $\alpha_1 = 2\{1 - \Phi(\gamma \sqrt{\log p})\}$ with $\gamma= 2(L+1+b)$ and $\alpha_2 = 1 - \varphi_{\kappa_j}(\gamma_{\kappa_j} ) $ with $\gamma_{\kappa_j} = \kappa_j + C_4 (\log^2 p + \sqrt{\kappa_j \log p})  $ such that:
\begin{equation}
\label{eq11}
P\{\mathcal{A} \subseteq \hat{\mathcal{A}}^{[1]}(\alpha_{1}, \alpha_{2}) \} = 1 - O(p^{-L}) \rightarrow 1 \text{ as } (n,p) \rightarrow \infty. 
\end{equation} 
\end{corollary}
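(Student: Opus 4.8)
The plan is to derive the sure screening property directly from the model selection consistency already established in Theorem \ref{th2} together with the set-inclusion facts guaranteed by partial faithfulness. The key observation is that, under the extended partial faithfulness condition in Corollary \ref{cor1}, independence screening (the case $S=\emptyset$) gives $\mathcal{A}\subseteq\mathcal{A}^{[1]}$: indeed, if $j\in\mathcal{A}$, then $\beta_j\neq 0$, so $\beta_j^{(k)}\neq 0$ for all $k$, and by the contrapositive of \eqref{eq10} we must have $\rho_j^{(k)}\neq 0$ for every $k$, i.e. $\min_k|\rho_j^{(k)}|\neq 0$, which is exactly the defining condition of $\mathcal{A}^{[1]}$ in \eqref{eq3}. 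Hence $\mathcal{A}\subseteq\mathcal{A}^{[1]}$ deterministically.

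First I would invoke Theorem \ref{th2}: with the stated choices $\alpha_1 = 2\{1-\Phi(\gamma\sqrt{\log p})\}$ and $\alpha_2 = 1-\varphi_{\kappa_j}(\gamma_{\kappa_j})$, we have the event $\{\hat{\mathcal{A}}^{[1]}(\alpha_1,\alpha_2)=\mathcal{A}^{[1]}\}$ occurring with probability $1-O(p^{-L})$. On that event, combining with the deterministic inclusion $\mathcal{A}\subseteq\mathcal{A}^{[1]}$ from the previous paragraph, we get $\mathcal{A}\subseteq\hat{\mathcal{A}}^{[1]}(\alpha_1,\alpha_2)$. Therefore
\[
P\{\mathcal{A}\subseteq\hat{\mathcal{A}}^{[1]}(\alpha_1,\alpha_2)\}\ \ge\ P\{\hat{\mathcal{A}}^{[1]}(\alpha_1,\alpha_2)=\mathcal{A}^{[1]}\}\ =\ 1-O(p^{-L}),
\]
which tends to $1$ as $(n,p)\to\infty$, establishing \eqref{eq11}.

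There is essentially no hard analytic step here; the corollary is a bookkeeping consequence of two facts proved or stated earlier. The one point that warrants care is the logical direction of partial faithfulness: \eqref{eq10} is stated as ``$\rho_{j\mid S}^{(k)}=0$ for some $S$ implies $\beta_j=0$'', so the inclusion I actually need is its contrapositive applied with $S=\emptyset$, namely $\beta_j\neq 0\Rightarrow\rho_j^{(k)}\neq 0$ for all $k$. I would spell this out explicitly, since it is the only substantive link between the consistency result (which concerns $\mathcal{A}^{[1]}$, a correlation-defined set) and the target $\mathcal{A}$ (a coefficient-defined set). Everything else is a union bound avoided by the simple monotonicity of probability under set inclusion.
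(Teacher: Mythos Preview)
Your proposal is correct and matches the paper's own argument, which states only that the proof ``simply combines the results of Theorem~\ref{th2} and the extended partial faithfulness and is skipped here.'' You have in fact written out the details the paper omits---the contrapositive of \eqref{eq10} with $S=\emptyset$ giving the deterministic inclusion $\mathcal{A}\subseteq\mathcal{A}^{[1]}$, followed by the probability bound from Theorem~\ref{th2}---so there is nothing to add.
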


The proof of this Corollary simply combines the results of Theorem \ref{th2} and the extended partial faithfulness and is skipped here. 

\setcounter{equation}{0} 
\section{Algorithms for variable selection with multiple studies}
\label{sec:algorithm}

Usually, performing sure screening once may not remove enough unimportant features. In our case since there are multiple studies, we expect our two-step screening procedure to remove many more unimportant features than in single study. If the dimension is still high after applying our screening procedure, we can readily extend the two-step screening procedure to an iterative variable selection algorithm by testing the partial correlation with gradually increasing size of the conditional set $S$. Since such method is a multiple study extension of the PC simple algorithm in \cite{buhlmann2010variable}, we call it ``Multi-PC" algorithm (Section \ref{subsec:multi}).

On the other hand, if the dimension has already been greatly reduced with the two-step screening, we can simply add a second stage group-based feature selection techniques to select the final set of variables (Section \ref{subsec:two}).

\subsection{Multi-PC algorithm}
\label{subsec:multi}

We start from $S=\emptyset$, i.e., our two-step screening procedure and build a first set of candidate active variables: 
\begin{equation}
  \label{eq11}
\mathcal{\hat{A}}^{[1,1]}  = \mathcal{\hat{A}}^{[1]} = \{j\in [p]; \hat{L}_j > \varphi^{-1}_{\hat{\kappa}_j}( 1- \alpha_2) \text{ or } \hat{\kappa}_j = 0 \}. 
\end{equation}

We call this set $stage_1$ active set, where the first index in $[,]$ corresponds to the stage of our algorithm and the second index corresponds to whether the set is for active variables ($[,1]$) or inactive variables ($[,0]$). If the dimensionality has already been decreased by a large amount, we can directly apply group-based feature selection methods such as group lasso to the remaining variables (to be introduced in Section \ref{subsec:two}). 

However, if the dimension is still very high, we can further reduce dimension by increasing the size of $S$ and considering partial correlations given variables in $\mathcal{\hat{A}}^{[1,1]}$. We follow the similar two-step procedure but now using partial correlation of order one instead of marginal correlation and yield a smaller $stage_2$ active set: 
\begin{equation}
  \label{eq12}
 \mathcal{\hat{A}}^{[2,1]} = \{j \in \mathcal{\hat{A}}^{[1,1]}; \hat{L}_{j\mid q} > \varphi^{-1}_{\hat{\kappa}_{j\mid q}}( 1 - \alpha_2) \text{ or } \hat{\kappa}_{j\mid q} = 0, \text{ for all } q\in \mathcal{\hat{A}}^{[1,1]} \backslash \{j\} \} ,
 \end{equation} 
where each self-normalized estimator of partial correlation can be computed by taking the residuals from regressing over the variables in the conditional set. 

We can continue screening high-order partial correlations, resulting in a nested sequence of $m$ active sets:
\begin{equation}
  \label{eq14}
\hat{\mathcal{A}}^{[m,1]} \subseteq \ldots \subseteq \hat{\mathcal{A}}^{[2,1]} \subseteq \hat{\mathcal{A}}^{[1,1]}. 
 \end{equation}  

Note that the active and inactive sets at each stage are non-overlapping and the union of active and inactive sets at a stage $m$ will be the active set in a previous stage $m-1$, i.e., $\mathcal{\hat{A}}^{[m,1]}\cup \mathcal{\hat{A}}^{[m,0]}= \mathcal{\hat{A}}^{[m-1,1]}$. This is very similar to the original PC-simple algorithm, but now at each order-level, we perform the two-step procedure. The algorithm can stop at any stage $m$ when the dimension of $\mathcal{\hat{A}}^{[m,1]}$ already drops to low to moderate level and other common group-based feature selection techniques can be used to select the final set. Alternatively, we can continue the algorithm until the candidate active set does not change anymore. The algorithm can be summarized as follows: 

\noindent\makebox[\linewidth]{\rule{18cm}{0.3pt}}
Algorithm 1. Multi-PC algorithm for variable selection. \\
\noindent\makebox[\linewidth]{\rule{18cm}{0.3pt}}
\begin{itemize}
 \item[Step 1.] Set $m=1$, perform the two-step screening procedure to construct $stage_1$ active set: 
 $$\mathcal{\hat{A}}^{[1,1]} = \{j\in [p]; \hat{L}_j > \varphi^{-1}_{\hat{\kappa}_j}( 1- \alpha_2) \text{ or } \hat{\kappa}_j = 0 \}.  $$
 \item[Step 2.] Set $m=m+1$. Construct the $stage_m$ active set:
 \begin{equation*}
  \begin{split}
   \mathcal{\hat{A}}^{[m,1]} = \{j\in \mathcal{\hat{A}}^{[m-1,1]}; \hat{L}_{j|S} > \varphi^{-1}_{\hat{\kappa}_{j|S}}(1-\alpha_2) \text{ or } \hat{\kappa}_{j|S} = 0, \\ \text{ for all } S \subseteq \mathcal{\hat{A}}^{[m-1,1]} \backslash \{j\} \text{ with } |S| = m -1 \}. 
   \end{split}
  \end{equation*} 
  \item[Step 3.] \textbf{Repeat} Step 2 until $m=\hat{m}_{reach}$, where $\hat{m}_{reach} = \min\{m: |\mathcal{\hat{A}}^{[m,1]}| \le m\}$.
\end{itemize}
\noindent\makebox[\linewidth]{\rule{18cm}{0.3pt}}
\vspace{0.3cm}

\subsection{Two-stage feature selection}
\label{subsec:two}
As an alternative to ``Multi-PC" algorithm for variable selection, we also introduce here a two-stage feature selection algorithm by combining our two-step screening procedure and other regular feature selection methods together. In single study, for example, Fan \& Lv (2008) performed sure independence screening in the first stage followed by model selection techniques including Adaptive Lasso, Dantzig Selector and SCAD, etc., and named those procedures as ``SIS-AdaLasso",``SIS-DS", ``SIS-SCAD" , accordingly. 

In our case, since the feature selection is group-based, we adopt a model selection technique using group Lasso penalty in the second stage:
\begin{equation}
  \label{eq15}
  \min_{\beta} \sum\limits_{k=1}^K ||y^{(k)} - X^{(k)}_{\mathcal{\hat{A}}^{[1]}} \beta^{(k)}_{\mathcal{\hat{A}}^{[1]}} ||^2_2  + \lambda \sum\limits_{j \in \mathcal{\hat{A}}^{[1]} }  ||\beta_j ||_2 \quad , 
\end{equation}
where $\mathcal{\hat{A}}^{[1]}$ is the active set identified from our two-step screening procedure and the tuning parameter $\lambda$ can be chosen by cross-validation or BIC in practice just like for a regular group Lasso problem. We call such two-stage feature selection algorithm as ``TSA-SIS-groupLasso".  

In addition, at any stages of the ``Multi-PC" algorithm when the dimension has already been dropped to a moderate level, the group Lasso-based feature selection techniques can always take over to select the final set of variables. 

\setcounter{equation}{0} 
\section{Numerical evidence}
\label{sec:simulation}

In this section, we demonstrate the advantage of TSA-SIS procedure in comparing to the multiple study extension of SIS (named ``Min-SIS"), which ranks the features by the minimum absolute correlation among all studies. We simulated data according to the linear model in (\ref{eq1}) including $p$ covariates with zero mean and covariance matrix $\Sigma^{(k)}_{i,j}=r^{|i-j|}$ where $\Sigma^{(k)}_{i,j}$ denotes the $(i,j)$th entry of $\Sigma^{(k)}_X$. 

In the first part of simulation, we fixed the sample size $n=100$, $p=1000$, the number of studies $K=5$ and performed $B=1000$ replications in each setting. We assumed that the true active set consisted of only ten variables and all the other variables had zero coefficients (i.e., $s_0=10$). The indices of non-zero coefficients were evenly spaced between $1$ and $p$. The variance of the random error term in linear model was fixed to be $0.5^2$. We randomly drew $r$ from $\{0,0.2,0.4,0.6\}$ and allowed different $r$'s in different studies. We considered the following four settings: 
\begin{enumerate}
\item Homogeneous weak signals across all studies: nonzero $\beta_j$ generated from $\text{Unif}(0.1,0.3)$ and $\beta_j^{(1)}=\beta_j^{(2)}= \ldots=\beta_j^{(K)}=\beta_j$. 
\item Homogeneous strong signals across all studies: nonzero $\beta_j$ generated from $\text{Unif}(0.7,1)$ and $\beta_j^{(1)}=\beta_j^{(2)}= \ldots=\beta_j^{(K)}=\beta_j$. 
\item Heterogeneous weak signals across all studies: nonzero $\beta_j$ generated from $\text{Unif}(0.1,0.3)$ and $\beta_j^{(k)}\sim N(\beta_j, 0.5^2)$.
\item Heterogeneous strong signals across all studies: nonzero $\beta_j$ generated from $\text{Unif}(0.7,1)$ and $\beta_j^{(k)}\sim N(\beta_j, 0.5^2)$.
\end{enumerate}

We evaluated the performance of Min-SIS using receiver operating characteristic (ROC) curves
which measured the accuracy of variable selection independently from the issue of choosing
good tuning parameters (for Min-SIS, the tuning parameter is the top number of features $d$). The OneStep-SIS procedure we mentioned above was actually one special case of the Min-SIS procedure (by thresholding at $\alpha_1$). In presenting our TSA-SIS procedure, we fixed $\alpha_1=0.0001$ and $\alpha_2=0.05$ so the result was just one point on the sensitivity vs. 1-specificity plot. We also performed some sensitivity analysis on the two cutoffs based on the first simulation (see Table \ref{tab:2}) and found the two values to be optimal since they had both high sensitivity and high specificity. Thus we suggested fixing these two values in all the simulations. 

\begin{table}
\caption{Sensitivity analysis on the choice of $\alpha_1$ and $\alpha_2$ in simulation (Sensitivity/Specificity) }
\label{tab:2}\par
\vskip .2cm
\centerline{\tabcolsep=3truept\begin{tabular}{ ccccc }
 \\
 \hline
Sensitivity/Specificity & $\alpha_2 = 0.15$ & 0.05 & 0.01 & 0.001   \\ 
 \hline 
$\alpha_1$=0.01 & 0.793/0.901  & 0.525/0.984 & 0.210/0.999 & 0.142/1.000  \\ 
 0.001 & 0.947/0.826  & 0.864/0.943 & 0.691/0.990 & 0.373/0.999  \\ 
 0.0001 & 0.966/0.816  & 0.922/0.932 & 0.840/0.985 & 0.681/0.998  \\ 
 \hline 
\end{tabular}}
\centerline{{\footnotesize Note: All value are based on average results from $B=1000$ replications.}}
\end{table}

\begin{figure}[h!]
\caption{Simulation results 1-4: the ROC curve is for Min-SIS, the black point is for our TSA-SIS using $\alpha_1=0.0001$ and $\alpha_2=0.05$.}
\label{fig1}
\begin{center}
\includegraphics[scale=0.8]{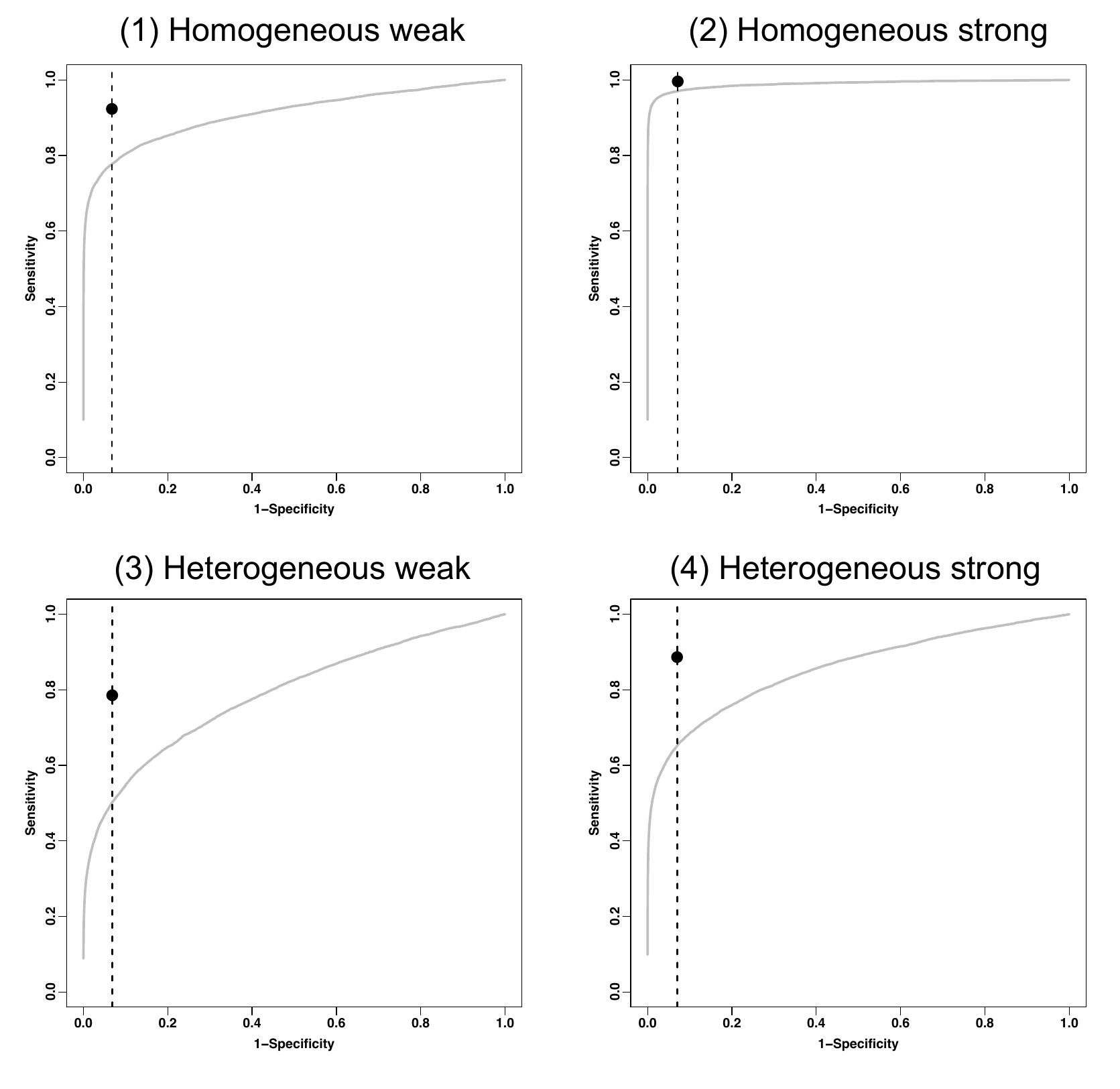}\par
\end{center}
\end{figure}

Figure \ref{fig1} showed the results of simulation 1-4. When the signals were homogeneously weak in all studies as in (1), TSA-SIS clearly outperformed the Min-SIS procedure (above its ROC curve). It reached about $90\%$ sensitivity with controlled false positive errors (specificity $\sim 95\%$). In order to reduce false negatives, Min-SIS had to sacrifice the specificity and increased the false positives, which in the end lost the benefits of performing screening (i.e. end up keeping too many features). When the signals became strong as in (2), both procedures performed equally well. This fit our motivation and theory and showed the strength of our two-step procedure in saving weak signals without much increase in false positive rates. When the signals became heterogeneous as in (3) and (4), both procedures performed worse than before. But the Min-SIS procedure never outperformed the TSA-SIS procedure since it only examined the minimum correlation among all studies while the two-step procedure additionally considered the aggregate statistics. 

\setcounter{equation}{0} 
\section{Real data application}
\label{sec:application}

We next demonstrated our method in three microarray datasets of triple-negative breast cancer (TNBC, sometimes a.k.a. basal-like), an aggressive subtype of breast cancer usually with poor prognosis.  Previous studies have shown that the tumor suppressor protein ``p53" played an important role in breast cancer prognosis and its expression was associated with both disease-free survival and overall survival in TNBC \citep{yadav2015biomarkers}. Our purpose was to identify the genes most relevant and predictive to the response - the expression level of \textit{TP53} gene, which encodes p53 protein. The three datasets are publicly available on authors' website or at GEO repository including METABRIC (a large cohort consisting of roughly 2000 primary breast tumours), GSE25066 and GSE76250 \citep{curtis2012genomic,itoh2014estrogen,liu2016comprehensive}. We subset the data to focus on the TNBC cases only and ended up with 275, 178 and 165 TNBC samples in each dataset, respectively. After routine preprocessing and filtering by including genes sufficiently expressed and with enough variation, a total of 3377 genes remained in common for the analysis. 

We applied our Multi-PC algorithm and compared to the OneStep-SIS procedure as well as the Min-SIS method by using $d=n/\log(n)=49$ (as suggested by their paper). We used $\alpha_1=0.0001$ and $\alpha_2=0.05$ (as determined by sensitivity analysis in simulation) and the ``Multi-PC" algorithm only ran up to the first order (i.e. $m=2$) and stopped with six features. This again showed the power of screening with multiple studies. After feature selection, we fit the linear model in each study to obtain the coefficient estimates and adjusted $R^2$. Table \ref{tab:3} showed the coefficient estimates and standard errors of the final set of six genes selected by our procedure. We added three columns to indicate whether they were also retained by the Min-SIS (and their relative rank) or OneStep-SIS procedures. As we can see from the table, all the six genes selected by our procedure were missed by the other methods. Those genes typically had weak signals in one or more studies thus were very likely to be incorrectly excluded if only one step screening is performed. Since the METABRIC study had a larger sample size, all the coefficients appeared to be more significant than the other two studies. 

\begin{table}[t!] 
\caption{The six genes selected by our TSA-SIS procedure.}
\label{tab:3}\par
\vskip .2cm
\centerline{\tabcolsep=3truept\begin{tabular}{ ccccccc }  
\\
\hline\hline  
Gene & METABRIC  & GSE25066 & GSE76250 & Min-SIS & Rank in & OneStep-SIS  \\ 
&  Est (SE) & Est (SE) & Est (SE) & $d$=49 & Min-SIS  & $|S|$=25 \\
 \hline 
Intercept & 7.600 (1.502)  & 0.213 (0.553) & -1.783 (0.971) & - & - & -  \\ 
EXOC1 & 0.251 (0.081)$^{\ast\ast}$  & 0.278 (0.157)$^{.}$ & 0.293 (0.167)$^{.}$ & N & 164 & N  \\ 
ITGB1BP1 & -0.134 (0.045)$^{\ast\ast}$  & 0.003 (0.111) & -0.178 (0.194)  & N & 123 & N  \\ 
RBM23 & 0.168 (0.078)$^{\ast}$  & 0.144 (0.167) & 0.367 (0.168)$^{\ast}$ & N & 152 & N  \\ 
SETD3 & -0.166 (0.081)$^{\ast}$  & 0.366 (0.184)$^{\ast}$ & -0.080 (0.175) & N & 101 & N  \\ 
SQSTM1 & -0.114 (0.050)$^{\ast}$  & 0.029 (0.099) & 0.245 (0.183) & N & 98 & N  \\ 
TRIOBP & -0.126 (0.062)$^{\ast}$  & 0.084 (0.118) & 0.628 (0.261)$^{\ast}$ & N & 91 & N  \\ 
   \hline 
   Adjusted-$R^2$ & 0.151 & 0.522 & 0.359 & & &  \\
 \hline\hline 
\end{tabular}}
{\footnotesize Note: ``$.$" indicates significant level of 0.1, ``$\ast$" for level of 0.05, ``$\ast\ast$" for level of 0.01.}
\end{table}

The gene \textit{EXOC1} and p53 are both components of the Ras signaling pathway which is responsible for cell growth and division and can ultimately lead to cancer \citep{rajalingam2007ras}. \textit{RBM23} encodes for an RNA-binding protein implicated in the regulation of estrogen-mediated transcription and has been found to be associated with p53 indirectly via a heat shock factor \citep{asano2016ier5}. \textit{ITGB1BP1} encodes for an integrin protein which is essential for cell adhesion and other downstream signaling pathways that are also modulated by p53 \citep{brakebusch2002integrins}. 

\section{Discussion}
\label{sec:discussion}

In this paper, we proposed a two-step screening procedure for high-dimensional regression analysis with multiple related studies. In a fairly general framework with weaker assumptions on the signal strength, we showed that our procedure possessed the sure screening property for exponentially growing dimensionality without requiring the normality assumption. We have shown through simulations that our procedure consistently outperformed the rank-based SIS procedure independent of their tuning parameter $d$. As far as we know, our paper is the first proposed procedure to perform variable screening in high-dimensional regression when there are multiple related studies. In addition, we also introduced two applicable variable selection algorithms following the two-step screening procedure.

Variable selection in regression with multiple studies have been studied in a subfield of machine learning called multi-task learning (MTL) before and the general procedure is to apply regularization methods by putting group Lasso penalty, fused Lasso penalty or trace norm penalty, etc. \citep{argyriou2007multi,zhou2012modeling,ji2009accelerated}. However, at ultra-high dimension, such regularization methods usually fail due to challenges in computation expediency, statistical accuracy and algorithmic stability. Instead, sure screening can be used as a fast algorithm for preliminary feature selection, and as long as it exhibits comparable statistical performance both theoretically and empirically, its computational advantages make it a good choice in application \citep{genovese2012comparison}. Our method has provided an alternative to target the high-dimensional multi-task learning problems. 

The current two-step screening procedure is based on the linear models but relaxes the Gaussian assumption to sub-Gaussian distribution. One can apply a modified Fisher's z-transformation estimator rather than our self-normalized estimator to readily accommodate general elliptical distribution families \citep{li2017variable}. In biomedical applications, non-continuous outcomes such as categorical, count or survival outcomes are more commonly seen. \cite{fan2010sure} extended SIS and proposed a more general independent learning approach for generalized linear models by ranking the maximum marginal likelihood estimates. \cite{fan2011nonparametric} further extended the correlation learning to marginal nonparametric learning for screening in ultra-high dimensional additive models. Other researchers exploited more robust measure for the correlation screening \citep{zhu2011model,li2012feature,balasubramanian2013ultrahigh}. All these measures can be our potential extension by modifying the marginal utility used in the screening procedure. Besides, the idea of performing screening with multiple studies is quite general and is applicable to relevant statistical models other than the regression model, for example, Gaussian graphical model with multiple studies. We leave these interesting problems in future study.

\setcounter{equation}{0} 
\section{Proofs}
\label{sec:proof}

We start by introducing three technical lemmas that are essential for the proofs of the main results. By the scaling property of $\hat{T}_j^{(k)}$ and Remark \ref{rem:scaling}, without loss of generality, we can assume $E(X^{(k)}_j) = E(Y^{(k)}) = 0$ and $\text{var}(X^{(k)}_j) = \text{var}(Y^{(k)}) = 1$ for all $k\in [K]$, $j\in [p]$. Therefore in the proof we do not distinguish between $\sigma_j^{(k)}$ and  $\rho_j^{(k)}$. The first lemma is on the concentration inequalities of the self-normalized covariance and $\hat{\theta}^{(k)}_j $. 

\begin{lemma}
Under the assumptions (C1) and (C2), for any $\delta \ge 2$ and $M>0$, we have: 
\begin{itemize}
\item[{\normalfont(i)}] $P(\max\limits_{j,k}|\frac{\hat{\sigma}^{(k)}_j  - \sigma^{(k)}_j }{(\hat{\theta}^{(k)}_j)^{1/2} } |  \ge \delta\sqrt{\frac{\log p}{n}} ) = O ((\log p)^{-1/2} p^{-\delta + 1 + b})$,
\item[{\normalfont(ii)}] $P(\max\limits_{j,k}| \hat{\theta}^{(k)}_j - \theta^{(k)}_j | \ge C_{\theta} \sqrt{\frac{\log p}{n}} ) = O (p^{-M})$,
\end{itemize}
where $C_{\theta}$ is a positive constant depending on $M_1$, $\eta$ and $M$ only. 
\label{lemma1}
\end{lemma}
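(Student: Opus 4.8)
The plan is to reduce both parts to tail bounds for sums of i.i.d. centered products of sub-Gaussian variables, and then to take a union bound over the $pK = O(p^{1+b})$ pairs $(j,k)$. Throughout I work with the standardized variables $Z_j^{(k)}, W^{(k)}$ of Remark \ref{rem:scaling}, so that $\sigma_j^{(k)} = \rho_j^{(k)}$, $\theta_j^{(k)} = \operatorname{var}(Z_j^{(k)} W^{(k)})$, and (C1) gives uniform control on the exponential moments of $Z_j^{(k)2}$, $W^{(k)2}$, hence on those of the products $Z_j^{(k)} W^{(k)}$ and of $(Z_j^{(k)} W^{(k)})^2$ (by Cauchy--Schwarz on the moment generating functions, the product of two sub-Gaussian variables is sub-exponential). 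Write $U_{i} = Z_{ij}^{(k)} W_i^{(k)}$ for the $i$-th observation's product; then $\hat\sigma_j^{(k)}$ is essentially $\frac1n\sum_i U_i$ up to the centering corrections coming from $\bar Z_j^{(k)}, \bar W^{(k)}$, and $\hat\theta_j^{(k)}$ is essentially $\frac1n\sum_i (U_i - \bar U)^2$.

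For part (i), the key tool is a self-normalized moderate deviation bound — this is exactly the Shao-type Cram\'er moderate deviation result cited in the paper (\citealp{shao1999cramer}), which says that for i.i.d. mean-zero variables with enough moments, $P\bigl(\sum_i U_i' / \sqrt{\sum_i U_i'^2} \ge x\bigr) = (1 - \Phi(x))(1 + o(1))$ uniformly for $x = o(n^{1/6})$. Taking $x = \delta\sqrt{\log p}$, which is $o(n^{1/6})$ by the assumption $\log^3 p = o(n)$ in (C2), gives $P\bigl(|\hat T_j^{(k),\mathrm{cen}}| \ge \delta\sqrt{\log p}\bigr) \lesssim (1-\Phi(\delta\sqrt{\log p})) \asymp (\log p)^{-1/2} p^{-\delta^2/2}$ for a single pair; here $\hat T^{(k),\mathrm{cen}}_j$ is the self-normalized ratio built from the already-centered $U_i$. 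A union bound over the $O(p^{1+b})$ pairs then yields $O((\log p)^{-1/2} p^{-\delta^2/2 + 1 + b})$. To land on the stated exponent $p^{-\delta+1+b}$ rather than $p^{-\delta^2/2+1+b}$ I would simply use the cruder bound $\delta^2/2 \ge \delta$ valid for $\delta \ge 2$ (equality at $\delta = 2$), which is why the hypothesis $\delta \ge 2$ appears. The remaining technical point is to absorb the difference between $\hat\sigma_j^{(k)} = \frac1n\sum(Z_{ij}^{(k)} - \bar Z)(W_i^{(k)} - \bar W)$ and $\frac1n\sum U_i$, and correspondingly between $\hat\theta_j^{(k)}$ built with and without the sample-mean centering; these corrections are of order $\bar Z\cdot\bar W$ and are negligible relative to $\sqrt{\log p / n}$ with overwhelming probability, using a standard sub-Gaussian tail for $\bar Z_j^{(k)}$ and $\bar W^{(k)}$ plus a union bound — this costs only another $O(p^{-M})$-type term.

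For part (ii), I would write $\hat\theta_j^{(k)} - \theta_j^{(k)} = \bigl(\frac1n\sum_i (U_i - EU_i)^2 - \operatorname{var}(U_i)\bigr) + (\text{centering corrections})$. The bracketed term is a centered average of the sub-exponential variables $(U_i - EU_i)^2 - \operatorname{var}(U_i)$; by a Bernstein-type inequality for sub-exponential summands, $P\bigl(|\cdot| \ge C_\theta\sqrt{\log p / n}\bigr) \le 2\exp(-c\, n \cdot (\log p/n)) = 2 p^{-c}$ once $\sqrt{\log p/n}$ is in the sub-Gaussian (rather than the heavier sub-exponential) regime of Bernstein's inequality, which again holds because $\log p = o(n)$. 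Choosing $C_\theta$ large enough in terms of $M_1,\eta,M$ makes $c \ge M$, and a union bound over $O(p^{1+b})$ pairs gives $O(p^{-M + 1 + b})$; re-absorbing the constant $1+b$ into $M$ (the statement is for arbitrary $M>0$) yields the claimed $O(p^{-M})$. The centering corrections are handled exactly as in part (i).

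The main obstacle I anticipate is bookkeeping the sample-mean centering cleanly: the statistics $\hat\sigma_j^{(k)}$ and $\hat\theta_j^{(k)}$ are defined with empirical means $\bar X_j^{(k)}, \bar Y^{(k)}$ subtracted, whereas the moderate-deviation and Bernstein inequalities apply most directly to sums of genuinely i.i.d. mean-zero terms; one must show the discrepancy is uniformly (over all $j,k$) of smaller order than the thresholds, which requires its own union bound and a little care to keep the error probabilities at the $O(p^{-M})$ level without degrading the leading $p^{-\delta+1+b}$ rate in (i). The second, more conceptual, point requiring care is invoking Shao's self-normalized moderate deviation with the moment/uniformity hypotheses it needs — verifying that (C1) supplies moments of all orders uniformly in $(j,k)$, and that the range $x = \delta\sqrt{\log p} = o(n^{1/6})$ is exactly what $\log^3 p = o(n)$ buys — so that the $o(1)$ in the moderate-deviation asymptotic is uniform over the pairs being union-bounded.
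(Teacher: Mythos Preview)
Your approach to part (i) is essentially what the paper does (by citing Lemma~2 of Cai--Liu, which is precisely the self-normalized moderate deviation of Shao plus centering corrections), and your remark that $\delta^2/2\ge\delta$ for $\delta\ge2$ correctly explains both the stated rate and the hypothesis $\delta\ge2$.

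There is, however, a genuine gap in your argument for part (ii). You write that $(U_i-EU_i)^2-\operatorname{var}(U_i)$ are \emph{sub-exponential} and invoke the standard Bernstein inequality with its sub-Gaussian/sub-exponential dichotomy. But $U_i=Z_{ij}^{(k)}W_i^{(k)}$ is a product of two sub-Gaussians, hence only sub-exponential (finite $\psi_1$-norm); its square $(U_i-EU_i)^2$ therefore has only finite $\psi_{1/2}$-norm, not $\psi_1$. The ordinary Bernstein inequality does not cover this tail class, and in particular your claimed condition ``$\log p=o(n)$'' is not what puts you in the Gaussian regime. The paper handles this by invoking a concentration inequality for sums of centered $\psi_{1/2}$ variables (equation~(3.6) of Adamczak--Wolff), which gives
\[
P\Bigl(\bigl|\tfrac1n\sum_i[(U_i-\rho_j^{(k)})^2-\theta_j^{(k)}]\bigr|>t/\sqrt n\Bigr)\le 2\exp\bigl(-c\min(t^2/n,\,t^{1/2})\bigr),
\]
and then takes $t=(C_\theta/3)\sqrt{n\log p}$; the Gaussian regime $t^2/n\le t^{1/2}$ now requires $(n\log p)^{3/4}\lesssim n$, i.e.\ exactly $\log^3 p=o(n)$ from (C2). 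Your sketch never uses this stronger assumption in part (ii), which is a symptom of having the wrong tail class. The fix is straightforward---replace ``sub-exponential'' by ``$\psi_{1/2}$'' and swap in the appropriate heavy-tailed Bernstein inequality---and the rest of your decomposition (main term plus sample-mean centering corrections) matches the paper's three-term split $\hat\theta\to\tilde\theta\to\check\theta\to\theta$.
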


The second and third lemmas, which will be used in the proof of Theorem \ref{th2}, describe the concentration behaviors of $\hat{H}_j^{(k)} := \frac{\frac{1}{\sqrt{n}}\sum\limits_{i=1}^n[ (X^{(k)}_{ij}- \bar{X}^{(k)}_j ) ( Y^{(k)}_i - \bar{Y}^{(k)} ) - \rho^{(k)}_j] }{\sqrt{\theta^{(k)}_j} } = \hat{T}^{(k)}_j \sqrt{\frac{\hat{\theta}^{(k)}_j }{\theta^{(k)}_j}} - \frac{\sqrt{n} \rho^{(k)}_j  }{ \sqrt{\theta^{(k)}_j} }$ and $\check{H}_j^{(k)} := \frac{\frac{1}{\sqrt{n}}\sum\limits_{i=1}^n (X^{(k)}_{ij} Y^{(k)}_i  - \rho^{(k)}_j) }{\sqrt{\theta^{(k)}_j} }$.  

\begin{lemma}
There exists some constant $c>0$ such that,
\begin{equation*}
P(|\sum\limits_{k\in l_j} [\check{H}_j^{(k)2} - 1 ]|> t ) \le 2\exp(-c \min[\frac{t^2}{\kappa_j}, t^{1/2}]), 
\end{equation*}
where $c$ depends on $M_1$ and $\eta$ only.
\label{lemma2}
\end{lemma}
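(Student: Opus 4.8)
The plan is to establish the concentration bound by recognizing $\sum_{k \in l_j} [\check{H}_j^{(k)2} - 1]$ as a sum of $\kappa_j$ independent, centered, sub-exponential random variables and then invoking a Bernstein-type inequality (e.g., the version in Vershynin's treatment, or Proposition 5.16 there). First I would note that, after the standardization reduction already invoked at the start of Section \ref{sec:proof}, we have $\rho_j^{(k)} = E(Z_j^{(k)} W^{(k)})$ and $\theta_j^{(k)} = \text{var}(Z_j^{(k)} W^{(k)})$, so each summand
$$
\check{H}_j^{(k)} = \frac{1}{\sqrt{n}\,\sqrt{\theta_j^{(k)}}}\sum_{i=1}^n \big( X_{ij}^{(k)} Y_i^{(k)} - \rho_j^{(k)} \big)
$$
is, by construction, a normalized sum with $E(\check{H}_j^{(k)}) = 0$ and $E(\check{H}_j^{(k)2}) = 1$. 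Crucially, these are independent across $k$ since the $K$ studies are independent, so $\check{H}_j^{(k)2} - 1$ are independent and centered across $k \in l_j$.

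The key step is to control the sub-exponential (Orlicz $\psi_1$) norm of each $\check{H}_j^{(k)2} - 1$ uniformly in $k$. To do this I would first show that $\check{H}_j^{(k)}$ itself is sub-Gaussian with a $\psi_2$-norm bounded by an absolute constant depending only on $M_1$ and $\eta$: the product $Z_j^{(k)} W^{(k)}$ is a product of two sub-Gaussian variables, hence sub-exponential with $\psi_1$-norm controlled via (C1) (using $2ab \le a^2 + b^2$ and the moment generating function bounds $E\{\exp(t Z_j^{(k)2})\} \le M_1$, $E\{\exp(t W^{(k)2})\} \le M_1$ for $|t| \le \eta$); then $\frac{1}{\sqrt n}\sum_i (X_{ij}^{(k)}Y_i^{(k)} - \rho_j^{(k)})$ is a normalized sum of i.i.d. centered sub-exponential terms, which — here one must be a little careful — is \emph{not} globally sub-Gaussian but satisfies a Bernstein-type tail $P(|\check H_j^{(k)}| > u\sqrt{\theta_j^{(k)}}/\sqrt{\theta_j^{(k)}}) \le 2\exp(-c\min(u^2, u\sqrt n))$. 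Since the lemma's conclusion only involves the mixed $\min[t^2/\kappa_j, t^{1/2}]$ rate, it suffices to know that $\check H_j^{(k)2}$ has tails no heavier than that of the square of a sub-exponential variable, i.e. $P(\check H_j^{(k)2} > u) \le 2\exp(-c\sqrt u)$ for $u$ in the relevant range, which under (C2) ($\log^3 p = o(n)$, so $\sqrt n$ dominates $\log p$-scale thresholds) is exactly the regime where the sub-exponential part of the tail governs. Thus $\|\check H_j^{(k)2} - 1\|_{\psi_{1/2}}$ (or equivalently the relevant tail exponent) is bounded by a constant depending only on $M_1, \eta$.

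Given the uniform tail control on the i.i.d.\ centered summands $\check H_j^{(k)2}-1$, $k \in l_j$, with $|l_j| = \kappa_j$, I would close the argument with a generalized Bernstein / Hanson--Wright-flavored inequality for sums of independent variables with $\psi_{1/2}$ tails: for such a sum $\sum_{k\in l_j}\xi_k$ with $\|\xi_k\|$ uniformly bounded one gets $P(|\sum \xi_k| > t) \le 2\exp(-c\min[t^2/(\kappa_j v), t^{1/2}])$ where $v$ and the constant $c$ absorb $M_1,\eta$; rescaling constants gives precisely the stated bound $2\exp(-c\min[t^2/\kappa_j, t^{1/2}])$. The main obstacle I anticipate is the second branch of the minimum: the square of a sub-exponential variable is heavier-tailed than sub-exponential (it is only ``sub-Weibull of order $1/2$''), so the standard Bernstein inequality does not apply off the shelf; one must either cite/derive the appropriate tail bound for sums of independent sub-Weibull$(1/2)$ variables, or split each $\check H_j^{(k)2}$ at a truncation level and handle the bounded part by Bernstein and the tail part by a union bound, verifying that under (C2) the truncation level can be chosen so the tail contribution is negligible at the scales $t$ of interest. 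Careful bookkeeping of which constants depend only on $M_1$ and $\eta$ (and not on $n$, $p$, $K$, or $\kappa_j$) will be the other point requiring attention.
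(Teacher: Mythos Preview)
Your proposal is correct and follows essentially the same route as the paper: establish that $\check{H}_j^{(k)}$ is centered with unit variance and sub-exponential ($\psi_1$) tails by (C1), deduce that $\check{H}_j^{(k)2}-1$ is centered with $\psi_{1/2}$ tails, use independence across $k$, and then invoke a concentration inequality for sums of independent sub-Weibull$(1/2)$ variables. The paper resolves precisely the obstacle you flag in your last paragraph by citing equation~(3.6) of Adamczak and Wolff (2011), which is exactly the off-the-shelf Bernstein-type bound for $\psi_{1/2}$ summands giving the $\min[t^2/\kappa_j,\,t^{1/2}]$ exponent; so no truncation argument is needed.
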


\begin{lemma}
There exists some constant $C_H>0$ such that,
\begin{eqnarray*}
P(\max\limits_{j,k} |\check{H}_j^{(k)} - \hat{H}_j^{(k)}|>  C_H \sqrt{\frac{\log^2 p }{n}} ) &=& O(p^{-M}),\\
P(\max\limits_{j,k} |\check{H}_j^{(k)2} - \hat{H}_j^{(k)2}|>  C_H \sqrt{\frac{\log^3 p }{n}} ) &=& O(p^{-M}),
\end{eqnarray*}
where $C_H$ depends on $M_1$, $\eta$, $M$ and $\tau_0$ only.
\label{lemma3}
\end{lemma}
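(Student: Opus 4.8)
\textbf{Proof proposal (Lemma \ref{lemma3}).} The plan is to reduce everything to the elementary identity that converts the centered cross-sum into the uncentered one. Writing $\bar X_j^{(k)} = n^{-1}\sum_{i=1}^n X_{ij}^{(k)}$ and $\bar Y^{(k)} = n^{-1}\sum_{i=1}^n Y_i^{(k)}$, the standard decomposition $n^{-1}\sum_i (X_{ij}^{(k)} - \bar X_j^{(k)})(Y_i^{(k)} - \bar Y^{(k)}) = n^{-1}\sum_i X_{ij}^{(k)} Y_i^{(k)} - \bar X_j^{(k)} \bar Y^{(k)}$ yields
\[
\check H_j^{(k)} - \hat H_j^{(k)} = \frac{\sqrt n\, \bar X_j^{(k)}\, \bar Y^{(k)}}{\sqrt{\theta_j^{(k)}}} = \frac{\big(\sqrt n\,\bar X_j^{(k)}\big)\big(\sqrt n\,\bar Y^{(k)}\big)}{\sqrt n\,\sqrt{\theta_j^{(k)}}}.
\]
Hence the first bound follows once we control $\max_{j,k}|\sqrt n\,\bar X_j^{(k)}|$ and $\max_k |\sqrt n\,\bar Y^{(k)}|$ and invoke $\theta_j^{(k)}\ge\tau_0$ from (C1).

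Using the reduction to standardized variables from Remark \ref{rem:scaling} (already in force at the start of Section \ref{sec:proof}), $\sqrt n\,\bar X_j^{(k)} = n^{-1/2}\sum_i Z_{ij}^{(k)}$ is a normalized sum of i.i.d.\ mean-zero sub-Gaussian variables---sub-Gaussianity being a consequence of the exponential-moment bound in (C1)---and therefore itself sub-Gaussian with a variance proxy depending only on $M_1,\eta$; thus $P(|\sqrt n\,\bar X_j^{(k)}| > t)\le 2\exp(-c t^2)$ for all $t>0$. Taking $t = C\sqrt{\log p}$ and union-bounding over $j\in[p]$ and the $K = O(p^b)$ studies gives $P(\max_{j,k}|\sqrt n\,\bar X_j^{(k)}| > C\sqrt{\log p}) = O(p^{1+b-cC^2})$, which is $O(p^{-M})$ once $C$ is chosen large enough in terms of $M,b,M_1,\eta$; the identical argument applies to $\bar Y^{(k)}$. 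On the intersection of these two events the displayed identity gives $|\check H_j^{(k)} - \hat H_j^{(k)}| \le (C^2/\sqrt{\tau_0})\sqrt{\log^2 p/n}$ for all $j,k$, which is the first claim with $C_H = C^2/\sqrt{\tau_0}$.

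For the squared version, factor $\check H_j^{(k)2} - \hat H_j^{(k)2} = (\check H_j^{(k)} - \hat H_j^{(k)})(\check H_j^{(k)} + \hat H_j^{(k)})$ and bound $|\check H_j^{(k)} + \hat H_j^{(k)}| \le 2|\check H_j^{(k)}| + |\check H_j^{(k)} - \hat H_j^{(k)}|$. Since the increments $X_{ij}^{(k)}Y_i^{(k)} - \rho_j^{(k)}$ are centered sub-exponential (products of sub-Gaussians, with sub-exponential norm controlled by $M_1,\eta$, exactly the moment control used in Lemma \ref{lemma1}), Bernstein's inequality gives $P(|\check H_j^{(k)}| > s)\le 2\exp(-c\min(s^2, s\sqrt n))$; with $s = C\sqrt{\log p}$ and $\log^3 p = o(n)$ from (C2) the minimum equals $s^2$, so after a union bound $\max_{j,k}|\check H_j^{(k)}| \le C\sqrt{\log p}$ off an event of probability $O(p^{-M})$. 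Combining with the first claim, on the good event $|\check H_j^{(k)2} - \hat H_j^{(k)2}| \le C_H\sqrt{\log^2 p/n}\,\big(2C\sqrt{\log p} + C_H\sqrt{\log^2 p/n}\big)$ uniformly in $j,k$; since $\log p/n\to 0$ the first term dominates, giving the bound $O\big(\sqrt{\log^3 p/n}\big)$ and hence the second claim after renaming the constant.

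The identity and the sub-Gaussian tail for the empirical means are routine; the only points requiring care are (i) confirming that the product increments $X_{ij}^{(k)}Y_i^{(k)} - \rho_j^{(k)}$ are genuinely sub-exponential with norm depending only on $M_1,\eta$, so that Bernstein applies with the stated scaling---this can be lifted essentially verbatim from the proof of Lemma \ref{lemma1}---and (ii) tracking the union-bound budget so that the final constant $C_H$ depends only on $M_1,\eta,M,\tau_0$ (and the structural exponent $b$) rather than on $n,p,K$; this is precisely where the growth restrictions $K = O(p^b)$ and $\log^3 p = o(n)$ in (C2) enter. I expect this bookkeeping over the simultaneous maximum in $j\le p$ and $k\le K$ to be the main, though fairly mild, obstacle.
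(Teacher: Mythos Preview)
Your proposal is correct and follows essentially the same route as the paper: the same identity $\check H_j^{(k)}-\hat H_j^{(k)}=(\sqrt n\,\bar X_j^{(k)})(\sqrt n\,\bar Y^{(k)})/\sqrt{n\theta_j^{(k)}}$, the same use of $\theta_j^{(k)}\ge\tau_0$, and the same Bernstein bound on $\check H_j^{(k)}$ for the squared version. The only cosmetic difference is that the paper treats the product $(\sqrt n\,\bar X_j^{(k)})(\sqrt n\,\bar Y^{(k)})$ directly as a single sub-exponential variable, whereas you bound the two sub-Gaussian factors separately at level $C\sqrt{\log p}$; both yield the same $\log p$ bound on the product and hence the same $\sqrt{\log^2 p/n}$ rate.
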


The proofs of the three lemmas are provided in the Appendix. 

\begin{proof}[Proof of Theorem \ref{th1}] 

We first define the following error events: 
$$E^{I,\mathcal{A}^{[0]}}_{j,k}= \{ |\hat{T}^{(k)}_j| > \Phi^{-1} (1- \alpha_1/2) \text{ and } j \in \mathcal{A}^{[0]}, k\in l_j \}, $$
$$E^{II,\mathcal{A}^{[0]}}_{j,k}= \{ |\hat{T}^{(k)}_j| \le \Phi^{-1} (1- \alpha_1/2) \text{ and } j \in \mathcal{A}^{[0]}, k\notin l_j  \}, $$
$$E^{I,\mathcal{A}^{[1]}}_{j,k}= \{ |\hat{T}^{(k)}_j| > \Phi^{-1} (1- \alpha_1/2) \text{ and } j \in \mathcal{A}^{[1]}, k\in l_j \}, $$
$$E^{II,\mathcal{A}^{[1]}}_{j,k}= \{ |\hat{T}^{(k)}_j| \le \Phi^{-1} (1- \alpha_1/2) \text{ and } j \in \mathcal{A}^{[1]}, k\notin l_j \}. $$
To show Theorem 1 that $P(A) = 1- O(p^{-L})$, it suffices to show that,   
\begin{equation}
\label{proof:eq1}
P\{\bigcup_{j,k}(E^{I,\mathcal{A}^{[0]}}_{j,k} \cup E^{II,\mathcal{A}^{[0]}}_{j,k} ) \} = O(p^{-L}),
 \end{equation}
and 
\begin{equation}
\label{proof:eq2}
P\{\bigcup_{j,k}(E^{I,\mathcal{A}^{[1]}}_{j,k} \cup E^{II,\mathcal{A}^{[1]}}_{j,k} ) \} = O(p^{-L}).
 \end{equation}
One can apply Lemma \ref{lemma1} to bound each component in (\ref{proof:eq1}) and (\ref{proof:eq2}) with $\alpha_1 = 2\{1 - \Phi(\gamma \sqrt{\log p})\}$ and $\gamma= 2(L+1+b)$. Specifically, we obtain that,
\begin{equation} \label{eq:proof1.1.1}
P(\bigcup_{j,k} E^{I,\mathcal{A}^{[0]}}_{j,k}) =  P( \max\limits_{j\in \mathcal{A}^{[0]}, k\in l_j}|\hat{T}^{(k)}_j | \ge \gamma \sqrt{\log p}  )  =  O(\frac{1}{\sqrt{\log p}} p^{-\gamma + 1 + b}) =  o(p^{-L}),
\end{equation}
where the second equality is due to Lemma \ref{lemma1} (i) with $\delta= \gamma$, noting that $\sigma^{(k)}_j=0$ and $\hat{T}^{(k)}_j=\sqrt{n}\hat{\sigma}^{(k)}_j/\sqrt{\hat{\theta}_j^{(k)}}$. In addition, we have that,
\begin{equation} \label{eq:proof1.1.2}
\begin{split}
P(\bigcup_{j,k} E^{I,\mathcal{A}^{[1]}}_{j,k}) = & P\{ \max\limits_{j\in \mathcal{A}^{[1]}, k\in l_j} |\hat{T}^{(k)}_j | \ge \gamma \sqrt{\log p}  \} \\ 
 \le & P(\max\limits_{j\in \mathcal{A}^{[1]}, k\in l_j} |\frac{\hat{\sigma}^{(k)}_j  - \rho^{(k)}_j }{(\hat{\theta}^{(k)}_j)^{1/2} } | \ge (\gamma - C_1) \sqrt{\frac{\log p}{n}} ) + O(p^{-L})\\
 = & O(\frac{1}{\sqrt{\log p}} p^{-(\gamma-C_1) + 1 + b})+O(p^{-L}) \\
 = & O( p^{-L}),
 \end{split}
\end{equation}
where the inequality on the second line is due to assumption (C4) on $l_j$ for $j\in \mathcal{A}^{[1]}$, Lemma \ref{lemma1} (ii) with $M=L$, and assumption (C1) $\min_{j,k}\theta_j^{(k)}\geq \tau_0$, i.e., $\hat{\theta}^{(k)}_j \ge \theta_j^{(k)}-C_{\theta}(\log p /n)^{1/2}\ge 0.99 \theta^{(k)}_j$. The equality on the third line follows from Lemma \ref{lemma1} (i) where $\delta= \gamma - C_1= L+1+b$. In the end, we obtain that,
\begin{equation} \label{eq:proof1.1.3}
\begin{split}
P\{ \bigcup_{j,k} (E^{II,\mathcal{A}^{[0]}}_{j,k} \cup E^{II,\mathcal{A}^{[1]}}_{j,k} )\} = & P( \max\limits_{j, k\notin l_j}|\hat{T}^{(k)}_j | < \gamma \sqrt{\log p}  )  \\ 
 \le & P(\max\limits_{j, k\notin l_j} |\frac{\hat{\sigma}^{(k)}_j  - \rho^{(k)}_j }{(\hat{\theta}^{(k)}_j)^{1/2} } | \ge (C_3 - \gamma) \sqrt{\frac{\log p}{n}} )+O(p^{-L}) \\
 = & O(\frac{1}{\sqrt{\log p}} p^{-(C_3 - \gamma) + 1 + b})+O(p^{-L}) \\
 = & O( p^{-L}),
 \end{split}
\end{equation}
where the inequality on the second line is due to assumptions (C3) and (C4) on $l_j$, Lemma \ref{lemma1} (ii) with $M=L$ and assumption (C1) on sub-Gaussian distributions, i.e., $\hat{\theta}^{(k)}_j \le \theta_j^{(k)}+C_{\theta}(\log p /n)^{1/2} \le 1.01 \theta^{(k)}_j$. In particular, we have implicitly used the fact that $\max_{j,l}\theta_j^{(k)}$ is upper bounded by a constant depending on $M_1$ and $\eta$ only. The equality on the third line follows from Lemma  \ref{lemma1} (i) where $\delta = C_3 - \gamma =L+1+b$.

Finally, we complete the proof by combining (\ref{eq:proof1.1.1})-(\ref{eq:proof1.1.3}) to show (\ref{proof:eq1})-(\ref{proof:eq2}).

\end{proof}

\begin{proof}[Proof of Theorem \ref{th2}] 

We first define the following error events:
$$E^{\mathcal{A}^{[0]},2}_{j} = \{ |\hat{L}_j| > \varphi^{-1} (1- \alpha_2) \text{ or } \hat{\kappa}_j =0\} \text{ for } j \in \mathcal{A}^{[0]}, $$
$$E^{\mathcal{A}^{[1]},2}_{j} = \{ |\hat{L}_j| < \varphi^{-1} (1- \alpha_2) \text{ and } \hat{\kappa}_j \neq 0\} \text{ for } j \in \mathcal{A}^{[1]}. $$

To prove Theorem 2, we only need to show that,
\begin{equation}
\label{proof:eq3}
P(\bigcup_{j\in \mathcal{A}^{[0]} } E^{\mathcal{A}^{[0]},2}_{j} ) = O(p^{-L}) \quad \text{ and }\quad P(\bigcup_{j\in \mathcal{A}^{[1]} } E^{\mathcal{A}^{[1]},2}_{j} ) = O(p^{-L}), 
 \end{equation}
with $\alpha_{2,\kappa_j}:=  1- \varphi_{\kappa_j} [\kappa_j + C_4 (\log^2 p + \sqrt{\kappa_j \log p}) ] := 1 - \varphi_{\kappa_j} (\gamma_{\kappa_j})$. 

Recall the event $A$ defined in Theorem \ref{th1}. Thus we have that, 
\begin{equation*}
\begin{split}
& P\{ (\cup_{j\in \mathcal{A}^{[0]} } E^{\mathcal{A}^{[0]},2}_{j} ) \bigcup  (\cup_{j\in \mathcal{A}^{[1]} } E^{\mathcal{A}^{[1]},2}_{j})  \} \\
\le & P(A^C) + p \max\limits_{j\in \mathcal{A}^{[0]} }  P(\sum\limits_{k\in l_j} \hat{T}^{(k)2}_j > \gamma_{\kappa_j}  ) + p \max\limits_{j\in \mathcal{A}^{[1]}, \kappa_j\neq 0 } P(\sum\limits_{k\in l_j} \hat{T}^{(k)2} <\gamma_{\kappa_j}  ). 
 \end{split}
\end{equation*}

Therefore, given the results in Theorem \ref{th1}, it suffices to show, 
\begin{equation}
\label{proof:eq4}
P(\sum\limits_{k\in l_j} \hat{T}^{(k)2} >\gamma_{\kappa_j}  ) = O(p^{-L-1}) \text{ for any } j\in  \mathcal{A}^{[0]},
 \end{equation}
and 
\begin{equation}
\label{proof:eq5}
P(\sum\limits_{k\in l_j} \hat{T}^{(k)2} < \gamma_{\kappa_j}  ) = O(p^{-L-1}) \text{ for any } j\in  \mathcal{A}^{[1]} \text{ and } \kappa_j >0.
 \end{equation}

We first prove equation (\ref{proof:eq4}). Since $j\in  \mathcal{A}^{[0]}$, we have $\hat{H}_j^{(k)} =  \hat{T}^{(k)}_j \sqrt{\frac{\hat{\theta}^{(k)}_j }{\theta^{(k)}_j}} $. We are ready to bound the probability of $\sum_{k\in l_j} \hat{T}^{(k)2}_j  > \gamma_{\kappa_j}$ below.
\begin{equation*}
\begin{split}
& P(\sum\limits_{k\in l_j} \hat{T}^{(k)2}_j  > \gamma_{\kappa_j} ) \\
\le & P(\sum\limits_{k\in l_j} \hat{H}^{(k)2}_j > (1- \frac{C_{\theta}}{\tau_0} \sqrt{\frac{\log p}{n}} )  \gamma_{\kappa_j}  ) + O(p^{-L-1}) \\ 
\le & P(\sum\limits_{k\in l_j} (\check{H}^{(k)2}_j  - 1) > (1- \frac{C_{\theta}}{\tau_0} \sqrt{\frac{\log p}{n}} )  \gamma_{\kappa_j}   - \kappa_j - \kappa_jC_H \sqrt{\frac{\log^3 p} {n}}  ) + O(p^{-L-1})  \\ 
= & P(\sum\limits_{k\in l_j} (\check{H}^{(k)2}_j  - 1) > \kappa_j +  C_4(\log^2 p + \sqrt{\kappa_j \log p}) - \frac{C_{\theta}}{\tau_0} \sqrt{\frac{\kappa_j^2 \log p} {n}}    \\
    & - \frac{C_{\theta}C_4}{\tau_0} (\sqrt{\frac{\log^5 p}{n}} + \sqrt{\frac{\kappa_j  \log^2 p}{n}}) - \kappa_j - \kappa_jC_H \sqrt{\frac{\log^3 p} {n}}  )  + O(p^{-L-1}) \\
\le & P(\sum\limits_{k\in l_j} (\check{H}^{(k)2}_j  - 1) > C_2'  (\log^2 p + \sqrt{\kappa_j \log p}) ) + O(p^{-L-1}) \\
= & O(p^{-L-1}).
 \end{split}
\end{equation*}

The inequality on the second line is due to assumption (C1) that $\min\limits_{j,k} \theta^{(k)}_j \ge \tau_0 > 0$ and Lemma \ref{lemma1} (ii) with $M=L+1$. The inequality on the third line follows from Lemma \ref{lemma3} with $M=L+1$. The inequality on the fifth line is by the choice of $\gamma_{\kappa_j}$ with a sufficiently large $C_4>0$ and the assumption (C2) that $\log^3 p = o(n)$ and $\kappa_j \log^2 p = o(n)$. The last equality follows from Lemma \ref{lemma2}.

Lastly, we prove (\ref{proof:eq5}) as follows,
\begin{equation}
\label{proof:eq6}
\begin{split}
& P(\sum\limits_{k\in l_j} \hat{T}^{(k)2}_j  < \gamma_{\kappa_j} ) \\
= & P(\sum\limits_{k\in l_j} ( \hat{H}^{(k)}_j  + \frac{\sqrt{n} \rho^{(k)}_j }{\sqrt{\theta^{(k)}_j } })^2  \frac{\theta^{(k)}_j}{\hat{\theta}^{(k)}_j } < \gamma_{\kappa_j}  ) \\ 
\le & P(\sum\limits_{k\in l_j} ( \hat{H}^{(k)}_j  + \frac{\sqrt{n} \rho^{(k)}_j }{\sqrt{\theta^{(k)}_j } })^2 \le (1 + \frac{C_{\theta}}{\tau_0} \sqrt{\frac{\log p}{n}})  \gamma_{\kappa_j} )  + O(p^{-L-1})  \\ 
\le &  P(\sum\limits_{k\in l_j} (\check{H}^{(k)2}_j  - 1) \le \kappa_jC_H \sqrt{\frac{\log^3 p} {n}}  - \kappa_j  + (1 + \frac{C_{\theta}}{\tau_0} \sqrt{\frac{\log p}{n}})  \gamma_{\kappa_j}  - C_m n  \sum\limits_{k\in l_j} \rho^{(k)2}_j  \\ 
&  - 2  \sum\limits_{k\in l_j} \check{H}^{(k)}_j \frac{\sqrt{n} \rho^{(k)}_j }{\sqrt{\theta^{(k)}_j } }  + 2C_H \sqrt{\frac{\log^2 p}{n}} \sum\limits_{k\in l_j} \frac{\sqrt{n} | \rho^{(k)}_j | }{\sqrt{\theta^{(k)}_j } }) + O(p^{-L-1}).  \\  
 \end{split}
\end{equation}

The inequality on the third line is due to assumption (C1) that $\min\limits_{j,k} \theta^{(k)}_j \ge \tau_0 > 0$ and Lemma \ref{lemma1} (ii) with $M=L+1$. The inequality on the fourth line follows from Lemma \ref{lemma3} (both equations) and $\min\limits_{j,k} (\theta^{(k)}_j)^{-1}:=C_m >0 $, guaranteed by the sub-Gaussian assumption in assumption (C1). 

We can upper bound the term $2C_H \sqrt{\frac{\log^2 p}{n}} \sum_{k\in l_j} \frac{\sqrt{n} | \rho^{(k)}_j | }{\sqrt{\theta^{(k)}_j } }$ in (\ref{proof:eq6}) as follow,
\begin{equation}\label{eq:proof2.1.1}
2C_H \sqrt{\frac{\log^2 p}{n}} \sum\limits_{k\in l_j} \frac{\sqrt{n} | \rho^{(k)}_j | }{\sqrt{\theta^{(k)}_j } } \le 2C_H \sqrt{\frac{\log^2 p}{n}} \frac{\sqrt{n}}{\sqrt{\tau_0 } }  \sqrt{\kappa_j} \sqrt{\sum\limits_{k\in l_j} \rho^{(k)2}_j}    = o(\sqrt{n  \sum\limits_{k\in l_j} \rho^{(k)2}_j}    ).
\end{equation}
The first inequality is by the Cauchy-Schwarz inequality and assumption (C1), and the second equality by the assumption (C2) that $\kappa_j \log^2 p = o(n)$. 

We next upper bound the term $-2 \sum_{k\in l_j} \check{H}^{(k)}_j \frac{\sqrt{n} \rho^{(k)}_j }{\sqrt{\theta^{(k)}_j } }$ with high probability. Note that $\theta_j^{(k)}$ is bounded below and above, i.e., $\tau_0\leq\theta_j^{(k)}\leq C_m^{-1}$ by assumption (C1). In addition, $\check{H}^{(k)}_j$ has zero mean and is sub-exponential with bounded constants by assumption (C1). By Bernstein inequality (Proposition 5.16 in \cite{vershynin2010introduction}), we have with some constant $c'>0$,
 \begin{equation*}
P(|2\sum\limits_{k\in l_j} | \check{H}^{(k)}_j \frac{\sqrt{n} | \rho^{(k)}_j | }{\sqrt{\theta^{(k)}_j }} | >t  ) \le 2\exp(-c' \min [\frac{t^2}{n \sum\limits_{k\in l_j} \rho^{(k)2}_j }], \frac{t}{\max\limits_{k\in l_j} \sqrt{n} | \rho^{(k)}_j|  }     ).
\end{equation*}
We pick $t= C_B \sqrt{ n  \sum_{k\in l_j} \rho^{(k)2}_j \log^2 p   }$ with a large constant $C_B$ in the inequality above and apply (\ref{eq:proof2.1.1}) to reduce (\ref{proof:eq6}) as follows,
\begin{equation*}
\begin{split}
& P(\sum\limits_{k\in l_j} \hat{T}^{(k)2}_j  < \gamma_{\kappa_j} ) \\
 \le & P(\sum\limits_{k\in l_j} (\check{H}^{(k)2}_j  - 1) \le - C_m n  \sum\limits_{k\in l_j} \rho^{(k)2}_j +  2C_B \sqrt{ n  \sum\limits_{k\in l_j} \rho^{(k)2}_j \log^2 p}   \\
 & + 2C_4  \sqrt{\kappa_j \log p} + 2C_4 \log^2 p )  + O(p^{-L-1})  \\
 \le & P(\sum\limits_{k\in l_j} (\check{H}^{(k)2}_j  - 1) \le - C_m C_2  (\log^2 p + \sqrt{\kappa_j \log p} )  +  2C_B \sqrt{ C_2 \log^2 p (\log^2 p  + \sqrt{\kappa_j \log p}) }  \\ 
& +  2C_4  \sqrt{\kappa_j \log p} + 2C_4 \log^2 p )  + O(p^{-L-1})  \\
 \le & P(\sum\limits_{k\in l_j} (\check{H}^{(k)2}_j  - 1) \le -C'_2 (\log^2 p + \sqrt{\kappa_j \log p}) ) +  O(p^{-L-1})  \\
 = & O(p^{-L-1}) .
 \end{split}
\end{equation*}

The inequality on the first line is obtained by the choice of $\gamma_{\kappa_j}$ with the chosen $C_4>0$ and the assumption (C2) that $\kappa_j \log^2 p = o(n)$. The inequalities on the second line and third line are by the assumption (C4) that $\sum_{k \in l_{j}}|\rho_j^{(k)}|^2  \ge \frac{C_2(\log^2p + \sqrt{\kappa_j \log p}) }{n}$ for a sufficiently large $C_2>0$. The last equality is by Lemma \ref{lemma2}. 

This completes the proof of (\ref{proof:eq4}) and (\ref{proof:eq5}), which further yields to
$$P\{ (\cup_{j\in \mathcal{A}^{[0]} } E^{\mathcal{A}^{[0]},2}_{j} ) \bigcup  (\cup_{j\in \mathcal{A}^{[1]} } E^{\mathcal{A}^{[1]},2}_{j})  \}  = O(p^{-L}),  $$
with the results from Theorem \ref{th1}. Therefore we complete the proof of Theorem \ref{th2}. 

\end{proof}

\renewcommand{\theequation}{A\arabic{equation}}    
\setcounter{equation}{0}  
\section*{\centerline{Appendix}}  

\subsection*{S1. Proof of Lemma 1}
\label{subsec:lemma1}

\begin{proof} 
Part (i) immediately follows from Lemma 2 (i) equation (25) in \cite{cai2011adaptive}. To prove part (ii), we need to bound the three terms on the right side of the following inequality,
\begin{equation}
\label{supp:eq1}
\max\limits_{j,k}| \hat{\theta}^{(k)}_j - \theta^{(k)}_j | \le \max\limits_{j,k}| \hat{\theta}^{(k)}_j - \tilde{\theta}^{(k)}_j | + \max\limits_{j,k} | \tilde{\theta}^{(k)}_j - \check{\theta}^{(k)}_j | + \max\limits_{j,k} | \check{\theta}^{(k)}_j - \theta^{(k)}_j |, 
\end{equation}
where $\tilde{\theta}^{(k)}_j := \frac{1}{n}\sum\limits_{i=1}^n ( X^{(k)}_{ij}  Y^{(k)}_i - \tilde{\rho}^{(k)}_j)^2$ with $\tilde{\rho}^{(k)}_j = \frac{1}{n}\sum\limits_{i=1}^n X^{(k)}_{ij} Y^{(k)}_i $, and $\check{\theta}^{(k)}_j := \frac{1}{n}\sum\limits_{i=1}^n ( X^{(k)}_{ij} Y^{(k)}_i  - \rho^{(k)}_j)^2$. Note that $E(\check{\theta}^{(k)}_j ) = \theta^{(k)}_j.$

By the marginal sub-Gaussian distribution assumption in assumption (C1), we have that $( X^{(k)}_{ij}  Y^{(k)}_i - \rho^{(k)}_j )^2 $ has mean $\theta^{(k)}_j$ and finite Orlicz $\psi_{1/2}$-norm (see, e.g., \cite{adamczak2011restricted}). Thus we can apply equation (3.6) of \cite{adamczak2011restricted}, i.e.,
\begin{equation*}
P(\max\limits_{j,k} \sqrt{n}|\check{\theta}^{(k)}_j -  \theta^{(k)}_j | > t  ) \le  2\exp(-c\min[\frac{t^2}{n}, t^{1/2}]), 
\end{equation*}
with $t=(C_{\theta}/3) \sqrt{n\log p}$ for a large enough constant $C_{\theta}>0$ depending on $M_1, \eta$ and $M$ only to obtain that,
\begin{equation}
\label{supp:eq2}
P(\max\limits_{j,k} |\check{\theta}^{(k)}_j -  \theta^{(k)}_j | > (C_{\theta}/3)\sqrt{\frac{\log p}{n}} ) =O(p^{-M}). 
\end{equation}
We have used the assumption $\log p = o(n^{1/3})$ in assumption (C2) to make sure $\frac{t^2}{n} \leq t^{1/2}$.

By applying equation (1) in supplement of \cite{cai2011adaptive}, we obtain that,
\begin{equation}
\label{supp:eq3}
P(\max\limits_{j,k} |\tilde{\theta}^{(k)}_j -  \hat{\theta}^{(k)}_j | > (C_{\theta}/3)\sqrt{\frac{\log p}{n}} ) =O(p^{-M}). 
\end{equation}
In addition, by a similar truncation argument as that in the proof of Lemma 2 in \cite{cai2011adaptive} and equation (7) therein, we obtain that by picking a large enough $C_{\theta}>0$, 
\begin{equation}
\label{supp:eq4}
P(\max\limits_{j,k} |\tilde{\theta}^{(k)}_j -  \check{\theta}^{(k)}_j | > (C_{\theta}/3)\sqrt{\frac{\log p}{n}} ) =O(p^{-M}). 
\end{equation}

We complete the proof by combining (\ref{supp:eq1})-(\ref{supp:eq4}) with a union bound argument.

\end{proof}

\subsection*{S2. Proof of Lemma 2}
\label{subsec:lemma2}

\begin{proof}
It is easy to check that $E(\check{H}_j^{(k)}) = 0$ and $\text{var}(\check{H}_j^{(k)}) = 1$. The marginal sub-Gaussian distribution assumption in assumption (C1) implies that $\check{H}_j^{(k)}$ has finite Orlicz $\psi_1$-norm (i.e., sub-exponential distribution with finite constants). Therefore, $(\check{H}_j^{(k)})^2 - 1$ is centered random variable with finite Orlicz $\psi_{1/2}$-norm. Note that $\check{H}_j^{(k)}$ are independent for $k \in [K]$. The result follows from equation (3.6) of \cite{adamczak2011restricted}.
\end{proof}

\subsection*{S3. Proof of Lemma 3}
\label{subsec:lemma3}

\begin{proof}
Note that $\check{H}_j^{(k)} - \hat{H}_j^{(k)} = \frac{\sqrt{n} \bar{X}^{(k)}_j  \sqrt{n}\bar{Y}^{(k)} }{\sqrt{n\theta^{(k)}_j}}$. By assumption (C1), we have that $E(\sqrt{n} \bar{X}^{(k)}_j )=E(\sqrt{n} \bar{Y}^{(k)} )=0$, $\text{var}(\sqrt{n} \bar{X}^{(k)}_j)=\text{var}(\sqrt{n} \bar{Y}^{(k)})=1$, and both $\sqrt{n} \bar{X}^{(k)}_j$ and $\sqrt{n} \bar{Y}^{(k)}$ are sub-Gaussian with bounded constants. Therefore, the first equation follows from Bernstein inequality (e.g., Definition 5.13 in \cite{vershynin2010introduction}) applied to centered sub-exponential variable $\sqrt{n}\bar{X}^{(k)}_j\cdot\sqrt{n}\bar{Y}^{(k)}$, noting $\theta^{(k)}_j \ge \tau_0$ by assumption (C1). The second equation follows from the first one, $\log^3 p=o(n)$, and a Bernstein inequality (e.g., Corollary 5.17 in \cite{vershynin2010introduction}) applied to the sum of centered sub-exponential variables $\check{H}_j^{(k)}$. 
\end{proof}

\bibliographystyle{apalike}
\bibliography{TSA}

\end{document}